\documentclass[11pt]{article}

\usepackage[a4paper,margin=30mm]{geometry}
\usepackage[round,authoryear]{natbib}
\usepackage{microtype}
\usepackage{pp-manuscript}
\usepackage[hidelinks]{hyperref}
\usepackage[capitalise,nameinlink,noabbrev]{cleveref}

\newcommand{\papertitle}{Information Games: Strategic Crowding and
Firm Repositioning in Language-Model Space}

\newcommand{\paperauthors}{Marcus Gawronsky, Chun-Sung Huang}

\newcommand{\paperaffiliations}{Department of Finance and Tax,
  University of Cape
Town}

\newcommand{\paperdate}{September 2026}

\newcommand{\paperkeywords}{Strategic competition; Firm positioning;
Common opportunities; Distributional measurement; Language models}
\newcommand{\paperjel}{D43; G14; L25; C63}

\ppDeclareValue{sim-activities}{8}
\ppDeclareValue{sim-nodes}{64}
\ppDeclareValue{appeal-tau}{16.000000000000}
\ppDeclareValue{outside-appeal}{0.100000000000}
\ppDeclareValue{operating-margin}{1.000000000000}
\ppDeclareValue{linear-cost}{0.100000000000}
\ppDeclareValue{congestion}{0.500000000000}
\ppDeclareValue{congestion-tau}{16.000000000000}
\ppDeclareValue{own-curvature}{0.500000000000}
\ppDeclareValue{initial-step}{1.000000000000}
\ppDeclareValue{step-contraction}{0.500000000000}
\ppDeclareValue{step-expansion}{1.100000000000}
\ppDeclareValue{max-backtracks}{80}
\ppDeclareValue{max-accepted-step}{1.000000000000}
\ppDeclareValue{iteration-ceiling}{100000}
\ppDeclareValue{distance-tolerance}{0.000000100000}
\ppDeclareValue{regret-tolerance}{0.000000000100}
\ppDeclareValue{computational-zero}{0.000000000001}
\ppDeclareValue{recovery-target}{0.800000000000}
\ppDeclareValue{sim-firms}{100}
\ppDeclareValue{sim-firms-small}{6}
\ppDeclareValue{capability-penalty}{2.000000000000}
\ppDeclareValue{symmetry-penalty}{0.000000000000}
\ppDeclareValue{opportunity-states}{5}
\ppDeclareValue{moving-states}{4}
\ppDeclareValue{equilibrium-states}{40}
\ppDeclareValue{starts}{4}
\ppDeclareValue{equilibrium-solves}{160}
\ppDeclareValue{article-occurrences}{31083}
\ppDeclareValue{story-families}{30884}
\ppDeclareValue{sample-profiles}{384}
\ppDeclareValue{measurement-cells}{24}
\ppDeclareValue{measurement-panels}{12000}
\ppDeclareValue{reporting-levels}{3}
\ppDeclareValue{corpus-firms}{100}
\ppDeclareValue{corpus-transitions}{278}
\ppDeclareValue{reconstruction-draws}{9}
\ppDeclareValue{split-scores}{2502}
\ppDeclareValue{latent-transitions}{300}
\ppDeclareValue{panels}{500}
\ppDeclareValue{primary-wilson}{0.992375659538}
\ppDeclareValue{corpus-v}{0.014118020866}
\ppDeclareValue{corpus-d}{0.012218879185}
\ppDeclareValue{corpus-l}{0.001899141680}
\ppDeclareValue{corpus-j}{-0.000026424220}
\ppDeclareValue{retention-pct}{86.548102610248}
\ppDeclareValue{erosion-pct}{13.451897389752}
\ppDeclareValue{optimized-distance-increase}{0.012170984094}
\ppDeclareValue{equal-distance-increase}{0.010271842413}
\ppDeclareValue{dense-full}{0.065901116316}
\ppDeclareValue{dense-crowding}{-0.083847288880}
\ppDeclareValue{dense-opportunity}{0.149748405196}
\ppDeclareValue{dense-crowding-second}{-0.082916234266}
\ppDeclareValue{dense-opportunity-first}{0.148817350581}
\ppDeclareValue{relative-share-pct}{90.469742225487}
\ppDeclareValue{balanced-relative-share-pct}{90.008453443959}
\ppDeclareValue{relative-component}{0.007974294913}
\ppDeclareValue{common-component}{0.000840027662}
\ppDeclareValue{composition-separation}{0.008814322575}
\ppDeclareValue{capacity-separation}{0.065576391242}
\ppDeclareValue{corpus-j-bp}{-29.713087155376}
\ppDeclareValue{corpus-j-common-pct}{100.482466167782}
\ppDeclareValue{corpus-j-common}{-0.000026551708}
\ppDeclareValue{corpus-j-cross}{-0.000000326365}
\ppDeclareValue{corpus-j-idiosyncratic}{0.000000453853}
\ppDeclareValue{paired-j-bp}{-19.897283232854}
\ppDeclareValue{paired-j}{-0.000018350651}
\ppDeclareValue{paired-j-common-pct}{103.394859276155}
\ppDeclareValue{paired-j-common}{-0.000018973630}
\ppDeclareValue{paired-j-cross}{0.000000180041}
\ppDeclareValue{paired-j-idiosyncratic}{0.000000442938}
\ppDeclareValue{frozen-j-bp}{-35.104414930012}
\ppDeclareValue{frozen-j}{-0.000030531943}
\ppDeclareValue{frozen-j-common-pct}{102.544644181602}
\ppDeclareValue{frozen-j-common}{-0.000031308872}
\ppDeclareValue{frozen-j-cross}{0.000000359332}
\ppDeclareValue{frozen-j-idiosyncratic}{0.000000417598}
\ppDeclareValue{paired-transitions}{184}
\ppDeclareValue{paired-firms}{99}
\ppDeclareValue{basis-nodes}{100}
\ppDeclareValue{basis-articles}{9266}
\ppDeclareValue{corpus-common-fall-share-pct}{97.772201034344}
\ppDeclareValue{corpus-late-common-fall-share-pct}{61.063655712358}
\ppDeclareValue{frozen-late-common-fall-share-pct}{0.005472389425}
\ppDeclareValue{late-year}{2021}
\ppDeclareValue{deviation-ratio-min}{0.637654836768}
\ppDeclareValue{deviation-ratio-max}{10.641769448765}
\ppDeclareValue{late-j-ratio}{0.098506305963}
\ppDeclareValue{representation-firms}{11}
\ppDeclareValue{fall-share-pct}{97.772201034344}
\ppDeclareValue{recon-all-v}{0.014118020866}
\ppDeclareValue{recon-all-d}{0.012218879185}
\ppDeclareValue{recon-all-l}{0.001899141680}
\ppDeclareValue{recon-all-firms}{100.000000000000}
\ppDeclareValue{recon-all-firm-transitions}{278.000000000000}
\ppDeclareValue{recon-2019-v}{0.013589232103}
\ppDeclareValue{recon-2019-d}{0.012973124888}
\ppDeclareValue{recon-2019-l}{0.000616107215}
\ppDeclareValue{recon-2019-firms}{94.000000000000}
\ppDeclareValue{recon-2019-firm-transitions}{94.000000000000}
\ppDeclareValue{recon-2020-v}{0.012684442949}
\ppDeclareValue{recon-2020-d}{0.010453302911}
\ppDeclareValue{recon-2020-l}{0.002231140038}
\ppDeclareValue{recon-2020-firms}{89.000000000000}
\ppDeclareValue{recon-2020-firm-transitions}{89.000000000000}
\ppDeclareValue{recon-2021-v}{0.015065509883}
\ppDeclareValue{recon-2021-d}{0.012559113795}
\ppDeclareValue{recon-2021-l}{0.002506396087}
\ppDeclareValue{recon-2021-firms}{95.000000000000}
\ppDeclareValue{recon-2021-firm-transitions}{95.000000000000}
\ppDeclareValue{component-0-total}{0.001305677230}
\ppDeclareValue{component-0-common}{0.001305677230}
\ppDeclareValue{component-0-relative}{0.000000000000}
\ppDeclareValue{component-2-total}{0.008814322575}
\ppDeclareValue{component-2-common}{0.000840027662}
\ppDeclareValue{component-2-relative}{0.007974294913}
\ppDeclareValue{reported-0}{0.008814322575}
\ppDeclareValue{correct-0-strategic}{500}
\ppDeclareValue{correct-0-independent}{500}
\ppDeclareValue{reported-20}{0.005641166448}
\ppDeclareValue{correct-20-strategic}{500}
\ppDeclareValue{correct-20-independent}{500}
\ppDeclareValue{reported-40}{0.003173156127}
\ppDeclareValue{correct-40-strategic}{500}
\ppDeclareValue{correct-40-independent}{500}
\ppDeclareValue{ladder-qwen8b-full-v}{0.002298460464}
\ppDeclareValue{ladder-qwen8b-full-d}{0.002007953446}
\ppDeclareValue{ladder-qwen8b-full-l}{0.000290507018}
\ppDeclareValue{ladder-qwen8b-full-firms}{11}
\ppDeclareValue{ladder-qwen8b-full-firm-transitions}{11}
\ppDeclareValue{ladder-qwen8b-full-shared-with-primary}{11}
\ppDeclareValue{ladder-qwen8b-full-retention-pct}{87.360799871124}
\ppDeclareValue{ladder-qwen4b-full-v}{0.014118020866}
\ppDeclareValue{ladder-qwen4b-full-d}{0.012218879185}
\ppDeclareValue{ladder-qwen4b-full-l}{0.001899141680}
\ppDeclareValue{ladder-qwen4b-full-firms}{100}
\ppDeclareValue{ladder-qwen4b-full-firm-transitions}{278}
\ppDeclareValue{ladder-qwen4b-full-shared-with-primary}{278}
\ppDeclareValue{ladder-qwen4b-full-retention-pct}{86.548102610248}
\ppDeclareValue{ladder-qwen8b-1024-v}{0.002548047388}
\ppDeclareValue{ladder-qwen8b-1024-d}{0.002463497297}
\ppDeclareValue{ladder-qwen8b-1024-l}{0.000084550092}
\ppDeclareValue{ladder-qwen8b-1024-firms}{11}
\ppDeclareValue{ladder-qwen8b-1024-firm-transitions}{11}
\ppDeclareValue{ladder-qwen8b-1024-shared-with-primary}{11}
\ppDeclareValue{ladder-qwen8b-1024-retention-pct}{96.681769259941}
\ppDeclareValue{ladder-qwen4b-1024-v}{0.017422870806}
\ppDeclareValue{ladder-qwen4b-1024-d}{0.015433135827}
\ppDeclareValue{ladder-qwen4b-1024-l}{0.001989734979}
\ppDeclareValue{ladder-qwen4b-1024-firms}{100}
\ppDeclareValue{ladder-qwen4b-1024-firm-transitions}{278}
\ppDeclareValue{ladder-qwen4b-1024-shared-with-primary}{278}
\ppDeclareValue{ladder-qwen4b-1024-retention-pct}{88.579752437206}
\ppDeclareValue{ladder-bge-large-full-v}{0.002364503013}
\ppDeclareValue{ladder-bge-large-full-d}{0.001868045691}
\ppDeclareValue{ladder-bge-large-full-l}{0.000496457321}
\ppDeclareValue{ladder-bge-large-full-firms}{11}
\ppDeclareValue{ladder-bge-large-full-firm-transitions}{11}
\ppDeclareValue{ladder-bge-large-full-shared-with-primary}{11}
\ppDeclareValue{ladder-bge-large-full-retention-pct}{79.003734873863}
\ppDeclareValue{ladder-qwen8b-256-v}{0.002879264981}
\ppDeclareValue{ladder-qwen8b-256-d}{0.002895329336}
\ppDeclareValue{ladder-qwen8b-256-l}{-0.000016064355}
\ppDeclareValue{ladder-qwen8b-256-firms}{11}
\ppDeclareValue{ladder-qwen8b-256-firm-transitions}{11}
\ppDeclareValue{ladder-qwen8b-256-shared-with-primary}{11}
\ppDeclareValue{ladder-qwen8b-256-retention-pct}{100.557932492301}
\ppDeclareValue{ladder-qwen8b-64-v}{0.009420602611}
\ppDeclareValue{ladder-qwen8b-64-d}{0.009795511612}
\ppDeclareValue{ladder-qwen8b-64-l}{-0.000374909001}
\ppDeclareValue{ladder-qwen8b-64-firms}{11}
\ppDeclareValue{ladder-qwen8b-64-firm-transitions}{11}
\ppDeclareValue{ladder-qwen8b-64-shared-with-primary}{11}
\ppDeclareValue{ladder-qwen8b-64-retention-pct}{103.979671117269}
\ppDeclareValue{ladder-ettax-v0-v}{0.000186446083}
\ppDeclareValue{ladder-ettax-v0-d}{0.000044185930}
\ppDeclareValue{ladder-ettax-v0-l}{0.000142260152}
\ppDeclareValue{ladder-ettax-v0-firms}{100}
\ppDeclareValue{ladder-ettax-v0-firm-transitions}{278}
\ppDeclareValue{ladder-ettax-v0-shared-with-primary}{278}
\ppDeclareValue{ladder-ettax-v0-retention-pct}{23.699039268359}
\ppDeclareValue{ladder-ettax-v1-v}{0.000323572082}
\ppDeclareValue{ladder-ettax-v1-d}{0.000133520861}
\ppDeclareValue{ladder-ettax-v1-l}{0.000190051220}
\ppDeclareValue{ladder-ettax-v1-firms}{100}
\ppDeclareValue{ladder-ettax-v1-firm-transitions}{278}
\ppDeclareValue{ladder-ettax-v1-shared-with-primary}{278}
\ppDeclareValue{ladder-ettax-v1-retention-pct}{41.264642124830}
\ppDeclareValue{ladder-ettax-v3-v}{0.000171092704}
\ppDeclareValue{ladder-ettax-v3-d}{0.000000324583}
\ppDeclareValue{ladder-ettax-v3-l}{0.000170768121}
\ppDeclareValue{ladder-ettax-v3-firms}{100}
\ppDeclareValue{ladder-ettax-v3-firm-transitions}{278}
\ppDeclareValue{ladder-ettax-v3-shared-with-primary}{278}
\ppDeclareValue{ladder-ettax-v3-retention-pct}{0.189711866987}
\ppDeclareValue{ladder-cells}{10}
\ppDeclareValue{ladder-positive-d}{10}
\ppDeclareValue{ladder-supported}{10}
\ppDeclareValue{ladder-thin-transitions}{11}
\ppDeclareValue{ladder-min-qwen-retention-pct}{79}
\ppDeclareValue{ladder-max-ettax-retention-pct}{42}

\ppDeclareValue{lean-verified-prop:equilibrium}{checked}
\ppDeclareValue{lean-disclosure-prop:equilibrium}{The capacity equilibrium statement is checked for nonnegative rational-share primitives and symmetric uniformly positive cost curvature; the registered Gaussian economy is a specialization. Symmetry and the independent counterfactual are supported by the separate entries below.}
\ppDeclareValue{lean-verified-app:independent-proof}{checked}
\ppDeclareValue{lean-disclosure-app:independent-proof}{The independent arm is checked as a product of one-firm problems, each facing outside appeal alone.}
\ppDeclareValue{lean-verified-app:symmetry-proof}{checked}
\ppDeclareValue{lean-disclosure-app:symmetry-proof}{The symmetry conclusion concerns firms within one economy, not equality between strategic and independent economies.}
\ppDeclareValue{lean-verified-prop:reallocation}{checked}
\ppDeclareValue{lean-disclosure-prop:reallocation}{The full gain is bounded below by the negative sum of payoff regrets. Exact optimality gives a nonnegative gain.}
\ppDeclareValue{lean-verified-prop:decomposition}{checked}
\ppDeclareValue{lean-disclosure-prop:decomposition}{This is an attribution identity, not a separately identified causal effect.}
\ppDeclareValue{lean-verified-app:reverse-attribution}{checked}
\ppDeclareValue{lean-disclosure-app:reverse-attribution}{Changing the attribution order leaves the total invariant.}
\ppDeclareValue{lean-verified-app:attribution-difference}{checked}
\ppDeclareValue{lean-disclosure-app:attribution-difference}{The equality records path dependence of component attribution.}
\ppDeclareValue{lean-verified-tab:exact}{checked}
\ppDeclareValue{lean-disclosure-tab:exact}{The exact example has positive full gain and negative crowding-first gain; separate global-best-response results establish its dated Nash profiles.}
\ppDeclareValue{lean-verified-app:old-nash}{checked}
\ppDeclareValue{lean-disclosure-app:old-nash}{Global best responses are proved over the entire feasible interval.}
\ppDeclareValue{lean-verified-app:new-nash}{checked}
\ppDeclareValue{lean-disclosure-app:new-nash}{Global best responses are proved with unchanged costs.}
\ppDeclareValue{lean-verified-eq:certificate}{checked}
\ppDeclareValue{lean-disclosure-eq:certificate}{The joint residual supplies a distance certificate for the strategic capacity equilibrium.}
\ppDeclareValue{lean-verified-app:regret-proof}{checked}
\ppDeclareValue{lean-disclosure-app:regret-proof}{The residual bounds payoff regret, not only an optimizer stopping criterion.}
\ppDeclareValue{lean-verified-eq:normalization-bound}{checked}
\ppDeclareValue{lean-disclosure-eq:normalization-bound}{The bound composes joint-to-block error control with normalization and ensures positive exact mass.}
\ppDeclareValue{lean-verified-app:independent-distance}{checked}
\ppDeclareValue{lean-disclosure-app:independent-distance}{The certificate applies to the Cartesian product of no-rival objectives.}
\ppDeclareValue{lean-verified-app:independent-regret}{checked}
\ppDeclareValue{lean-disclosure-app:independent-regret}{The certificate applies to the Cartesian product of no-rival objectives.}

\title{\papertitle}
\author{\paperauthors\\\paperaffiliations}
\date{\paperdate}

\begin{document}

\maketitle

\begin{abstract}
  Firms follow changing economic opportunities, but rivalry changes
their response.
We develop ESCAPE, a rational-share game of distribution-valued
positioning with heterogeneous capability costs, establish a unique
equilibrium, and derive an exact reallocation restriction separating
opportunity and crowding contributions.
Competition need not push firms apart: rational firms can enter
increasingly crowded regions, because the full reallocation gain can be
positive even when its crowding component is negative.
Rivalry instead changes how differently firms respond to the same
opportunity shift.
In the benchmark many-firm economy, the relative-response component accounts
for \ppnum[1]{relative-share-pct}\% of the weighted squared
composition-response contrast between strategic and independent firms;
it is numerically zero under identical capabilities.
The contrast remains distinguishable under corpus-scale measurement.
Corporate-news distributions document persistent but changing peer
structure: optimized peer
reconstructions retain \ppnum[1]{retention-pct}\% of their aggregate
validation advantage one year later even as absolute distance to both frozen
reconstructions increases.
Peer structure persists as firms move, so current similarity remains
informative while their relative positions change.

\end{abstract}

\noindent\textit{Keywords:} \paperkeywords

\medskip

\noindent\textit{JEL classification:} \paperjel

\section{Introduction}\label{sec:introduction}

Consider a firm whose economic opportunities have just changed.
A new technology, production method, or customer need has made some
activities more valuable than they were.
The firm asks where additional capacity is now worth adding, knowing
both that it supplies some activities more cheaply than others and
that its rivals are reading the same change.
It can expand into a region that is becoming busy when the improvement
in value covers its operating cost and the share conceded to arriving firms.
Whether repositioning of this kind leaves firms' relations to one
another intact matters beyond the firm making the decision.
Financial analysts use similar firms to select valuation comparables,
map exposures,
and assess the relevance of historical risk relationships.
Those uses depend on the persistence of the underlying economic relation,
yet firms can reposition as opportunities change.
Firms exposed to the same opportunity can move together in absolute
terms while changing their relative positions.
A useful account of repositioning must therefore distinguish the
common force that makes a destination attractive from the strategic
force that changes its value as rivals participate.

Firm position is naturally distribution-valued.
A diversified business participates in several product, technology,
and supply-chain domains, with different amounts of activity in each.
A single average location obscures that internal composition: two
firms can have similar centroids even though one concentrates on a
narrow activity and the other spans distant activities.
We therefore represent an economic position by capacity distributed
over a common activity space, whose total mass measures scale and
whose normalized distribution measures composition.
Language-model representations supply an observable counterpart by
mapping a firm's articles to a cloud of information positions and
retaining the cloud's internal structure.

We develop ESCAPE (Endogenous Strategic Crowding and Positioning
Equilibrium) to separate opportunity movement from strategic interaction.
Firms allocate capacity across fixed activities, earn shares of
common operating opportunities, and incur costs that reflect
heterogeneous capabilities.
Rivals enter the denominator of a firm's opportunity share, so their
participation changes marginal incentives even when the opportunity
itself is unchanged.
The relevant intertemporal comparison holds the common opportunity
path and capabilities fixed across two economic models: a strategic
equilibrium and an independent-firm counterfactual in which rival
participation is removed.
The difference between their responses measures the incremental
effect of rivalry within the model.

The central theoretical result concerns how the new position's revenue
advantage changes between environments.
If each position is optimal in its own environment, both are mutually
feasible, and the firm's cost function is stable, adding the two
choice inequalities eliminates that cost function.
The resulting full reallocation gain obeys
\begin{equation}\label{eq:intro-gain}
  J_i^{\mathrm{full}}\geq0,
\end{equation}
and can be decomposed into opportunity and crowding contributions.
An exact two-firm example establishes the economically central possibility:
rational firms can enter a region that is becoming more crowded, because
the example's full gain is positive while its crowding contribution is negative.
When an opportunity improves, several firms can follow it, while
strategic differentiation appears in how their equilibrium responses
differ from the responses they choose without rivalry.
The pattern also holds beyond the two-firm construction: in a dense
\ppnum[0]{sim-firms}-firm economy with heterogeneous capabilities, all
\ppvalue{latent-transitions} transitions have a positive full gain and a
negative crowding contribution.

Where that difference resides is the paper's second result.
Strategic and independent firms face the same changing opportunities and
capability costs in dense Gaussian economies.
The difference between the two arms' composition changes splits exactly
into a component shared by all firms and a component that varies across them.
In the main heterogeneous economy the relative response component carries
\ppnum[1]{relative-share-pct}\% of that contrast; under identical
capabilities it is numerically zero.
In this benchmark, rivalry changes responses most for firms whose
capabilities differ.
The contrast also survives normalization and finite article sampling under
the observed article-count and story-family structure, so the economic
distinction remains legible at the resolution of distributional measurement.

The corporate-news evidence documents the corresponding pattern in
observed firm positions.
On \ppvalue{corpus-firms} firms and \ppvalue{corpus-transitions}
adjacent annual transitions, a target-specific combination of peer
distributions reconstructs independent origin-year articles better
than an equal-weight combination of the same aligned peers.
Most of that advantage remains at the next date: the ratio of
aggregate destination to validation advantages is \ppnum[1]{retention-pct}\%.
Meanwhile the firm's own information distribution moves materially away
from both frozen targets.
Peer structure remains informative as the firm moves.
The model provides an economic interpretation of changing relative positions;
the reconstruction statistic measures the persistence of their observable
peer structure.

We study repositioning in a proportional-share allocation game with overlapping
activities and heterogeneous expansion costs, building on established equilibrium
theory.
The reallocation restriction and exact example explain how rational common
movement can coexist with adverse crowding.
The computational experiment separates common and relative responses to
the removal of rivalry; heterogeneous adjustment accounts for most of
the composition contrast in the benchmark economy.
The observation experiment measures recoverability under the specified
reporting and sampling process, and the corpus documents the persistence of
measured peer structure as firm information evolves.
For financial analysis, these distinctions separate shared movement from
changes in relative economic structure: current proximity can identify useful
peers without implying that their future positions converge.

\section{Related literature}\label{sec:literature}

\subsection{Competition, differentiation, and capabilities}

Spatial competition links a firm's position to the demand it can
serve and the rivals it encounters.
\citet{hotelling_stability_1929} gives location this broader
characteristic-space interpretation, with differentiation governed
jointly by market access and competitive incentives.
In ESCAPE, the position is an allocation across several activities,
and firms differ in the marginal costs
of expanding those activities.
This changes the response object: a common opportunity shift can
induce shared directional movement together with heterogeneous
changes in internal composition.
The question is how rivalry modifies that allocation relative to the
same capability-constrained response without rivals.

Dynamic oligopoly makes the timing and sunk costs of repositioning central.
\citet{ericson_pakes_1995} model investment, entry, and exit under firm-specific
uncertainty, and \citet{sweeting_positioning_2013} estimates how operating
incentives change radio format choices.
Our comparison holds the activity menu and capabilities fixed across
dated equilibria, isolating how rival participation changes responses
to common opportunities.

The operating-revenue rule belongs to the proportional-share contest family:
a firm's share depends on its appeal relative to rival and outside appeal.
Multi-battle contest models characterize equilibrium and shock propagation
across linked opportunities.
\citet{xu_zenou_zhou_conflict_2022} study battle-specific efforts with
interconnected costs and constraints; \citet{dziubinski_goyal_zhou_contests_2025}
allow effort on one battlefield to improve performance on others through
player-specific spillover networks.
Our common rectangular map from activities to demand nodes gives capacity a
related overlapping-opportunity interpretation, with heterogeneous expansion
costs governing firms' allocations.

The equilibrium foundation draws on concave-game and variational-inequality
methods \citep{rosen_concave_1965,parise_ozdaglar_network_games_2019}.
We verify contest curvature for the capacity-to-appeal map and obtain an
explicit strong-monotonicity bound from expansion costs.
The resulting unique equilibrium makes the response comparison well defined.
Our application asks how rival participation changes common and relative
composition responses to the same opportunity shift, and which parts of that
distinction survive the specified observation process.
The finite-change restriction follows the revealed-preference approach to
optimizing behavior \citep{varian_optimizing_1985}.
Stable costs cancel across choices, so the restriction concerns how
the chosen position is revalued across environments; no cost schedule
needs to be estimated.

\subsection{Textual measures of economic relations}

Textual measures make firm-specific economic relations observable
beyond conventional industry classifications.
\citet{hoberg_phillips_2016} construct changing product-similarity
networks from firms' product descriptions and relate R\&D and
advertising to subsequent differentiation.
Their evidence establishes a close antecedent for studying evolving
firm relations in text.
Our distinct object is a distribution-valued strategic choice and the
finite-change restriction implied by dated optimization under moving
opportunities.
The model comparison holds the opportunity path fixed across economic
arms, separating the role of rivalry from movement that common
opportunities alone would generate.

The financial relevance of changing product relations extends beyond
industry classification.
\citet{hoberg_phillips_prabhala_2014} measure changes in rivals'
product descriptions
relative to a focal firm's products and relate this product-market
fluidity to payout
policy and cash holdings.
Their evidence makes competitive change relevant to corporate
financial flexibility.
Our question moves upstream of those financial choices: when an
opportunity changes,
how does rivalry alter the position from which a firm encounters it?
The corpus exercise then asks how long a fitted peer representation
remains useful as
that observed position evolves.
It evaluates the temporal relevance of a peer relation; textual
similarity alone does not identify the strategic response.

News-derived networks offer a complementary representation of
economic connection.
\citet{schwenkler_network_2020} use co-mentions to construct firm
links and study their financial information content.
An edge records that firms are discussed together; a distribution of
article embeddings retains where each firm's information is located
across the representation space.
The general text-as-data framework of
\citet{gentzkow_text_as_data_2019} motivates treating that
representation as a measurement choice; text does not directly reveal
economic primitives.
We make the distinction explicit by passing equilibrium capacities
through normalization, reporting, and article sampling before
evaluating model discrimination.
The observation layer makes the mapping from economic choices to text
an explicit part of the comparison.

\subsection{Distributional representations in finance}

Optimal transport compares distributions through the cost of aligning
their mass, and Wasserstein barycentres extend that geometry to
collections of measures \citep{agueh_carlier_barycenters_2011}.
In finance, distribution-valued firm characteristics have been used
to derive pairwise covariance restrictions and portfolio-risk
certificates \citep{gawronsky_continuous_2026,gawronsky_distance_mpt_2026}.
Those applications connect observed separation to financial risk
under maintained transmission conditions.
The present paper studies an upstream operating question: how firms'
allocations respond to changes in the economic environment, and which
features of that response remain visible in information distributions.

The closest representation benchmark is target-anchored barycentric
reconstruction \citep{gawronsky_spatial_2027}.
It estimates how a combination of aligned peer distributions
represents a fixed target, yielding weights with a clear informational role.
Here we freeze that representation and evaluate its
independent-validation advantage at the next date.
This temporal comparison measures persistence of peer information
without interpreting reconstruction weights as competitive exposures.
Together, the strategic model and reconstruction evidence distinguish
three objects that a similarity network can otherwise conflate: an
economic allocation, its public-information representation, and the
usefulness of peers in reconstructing that representation.

\subsection{Computational equilibrium experiments}

Agent-based computational economics studies interacting decision makers,
with institutions and learning central to coordination
\citep{tesfatsion_ace_2006} and heterogeneous investor dynamics central to
artificial financial markets \citep{lebaron_computational_2006}.
Here dated Nash equilibria determine economic allocations; stochastic
observation measures the resulting distributions.
Specifying these layers separately distinguishes comparative equilibrium
from adaptive trading and learning.

\section{Economic model}\label{sec:model}

\subsection{The firm's repositioning problem}

A firm chooses where to operate as the economic value of its
opportunities changes.
A new production method, input market, regulation, or customer need
raises the return to some activities and lowers it for others.
The firm allocates additional capacity across opportunities, taking
account of its existing capabilities and the rivals responding to the
same change.

Three considerations enter that judgment.
The first is how much each opportunity is now worth.
The second is the firm's own comparative advantage, because expanding
into activities far from what it already does is expensive.
The third is congestion, because rivals drawn by the same improvement
reduce the share of an opportunity that each participant obtains.
Write $q_i$ for the capacities firm $i$ allocates across activities,
$q_{-i}$ for its rivals' allocations, and $r_t$ for the value of each
opportunity at date $t$.
The firm's payoff is then
\begin{equation}\label{eq:firm-problem}
  \pi_i(q_i,q_{-i};r_t)=
  \underbrace{\sum_\ell r_{\ell t}
  \frac{A_{i\ell}(q_i)}{A_{i\ell}(q_i)+B_{i\ell}(q_{-i})}}
  _{\text{value of serving current opportunities}}
  -\underbrace{C_i(q_i)}_{\text{cost of positioning there}},
\end{equation}
and it chooses
\begin{equation}\label{eq:firm-choice}
  q_i^*(t)=\arg\max_{q_i\geq0}\pi_i(q_i,q_{-i};r_t).
\end{equation}
Here $r_{\ell t}$ is how valuable opportunity $\ell$ is at date $t$;
$A_{i\ell}$ measures how strongly the firm's chosen activities position
it to serve that opportunity;
$B_{i\ell}$ measures how much competing capacity contests the same
opportunity;
and $C_i$ is what the chosen position costs the firm given its capabilities.

Two firms facing the same change need not respond alike.
Their capability costs differ, so an improvement that one can pursue
cheaply is expensive for the other to follow.
Strategic differentiation in this model describes relative responses
to a common change.
The remainder of this section gives each object in
\eqref{eq:firm-problem} a specific form, characterizes the resulting
equilibrium, and derives what a rational change of position between two
dates implies.

\subsection{Distribution-valued position and operating opportunity}

Firms can expand several activities simultaneously, and expansion
need not preserve their total scale.
Let $i=1,\ldots,N$ index firms, $a=1,\ldots,m$ fixed activity sites
$x_a\in\mathbb R^d$, and $\ell=1,\ldots,L$ demand nodes $z_\ell$.
Firm $i$ chooses a vector $q_i=(q_{ia})_a$ in the capacity cone
$\mathcal Q_i=\mathbb R_+^m$; $q_{ia}$ is operating capacity
allocated to activity $a$.
Capacity is measured in common model units, so the joint feasible set is
$\mathcal Q=\prod_i\mathcal Q_i=\mathbb R_+^{Nm}$ with the Euclidean
inner product.
The number of sites is positive, and all index sets are finite.
Its capacity measure and normalized position are
\begin{equation}\label{eq:position}
  Q_i=\sum_a q_{ia}\delta_{x_a},\qquad
  S_i=\sum_{a}q_{ia},\qquad
  P_i=\sum_a p_{ia}\delta_{x_a},\quad p_{ia}=q_{ia}/S_i,
\end{equation}
with $P_i$ defined when $S_i>0$.
The vector $p_i$ records the composition of simultaneous activities;
its coordinates are capacity shares, not probabilities of choosing
one activity at random.
Keeping $S_i$ separate allows competition to change the amount of
activity as well as its distribution.
A composition observed in public information will reveal only the latter margin.

An activity can serve several nearby demands.
We use a Gaussian kernel $K_{\ell
a}=\exp\{-\tau_A\|x_a-z_\ell\|^2/d\}$ to translate activity capacity
into appeal at a demand node.
Thus $K\in\mathbb R_+^{L\times m}$ is a linear map and
$A_i=Kq_i\in\mathbb R_+^L$ is the induced demand-node appeal vector.
Outside appeal is expressed in the same appeal units as $A_i$.
A higher $\tau_A$ makes appeal decline more quickly with distance,
reducing the range of demands served by a given activity.
Own appeal and the rival environment are
\begin{equation}\label{eq:appeal}
  A_{i\ell}(q_i)=\sum_{a}K_{\ell a}q_{ia},\qquad
  B_{i\ell}(q_{-i})=b_\ell+\sum_{j\ne i}A_{j\ell}(q_j),\qquad b_\ell>0.
\end{equation}
The outside appeal $b_\ell$ represents alternatives beyond the
modeled firms and keeps the revenue denominator positive even when no
modeled firm serves a node.
Own capacity increases access to a demand opportunity; rivals divide
that opportunity among more participants.

Let $r_{\ell t}\geq0$ be the operating opportunity available at node
$\ell$ at date $t$.
These weights value demand regions in operating-revenue units.
For a supplied rival field $B_i$, the revenue function is
\begin{equation}\label{eq:revenue}
  R_i(q_i;r_t,B_i)=\sum_\ell r_{\ell t}
  \frac{A_{i\ell}(q_i)}{A_{i\ell}(q_i)+B_{i\ell}}.
\end{equation}
This is the unit-exponent proportional-share contest technology with positive
outside appeal \citep{xu_zenou_zhou_conflict_2022}; the ratio allocates
operating revenue according to proportional appeal shares.
Doubling an opportunity weight doubles its contribution to revenue at
an unchanged profile, whereas doubling all firm capacities need not
double revenue because shares approach saturation.
Common opportunities are operating primitives, distinct from
financial factor premia.
The experiment below changes their distribution over nodes while
preserving their total mass.

\subsection{Capabilities, costs, and equilibrium}

Firms differ in the activities they can expand cheaply.
A capability centre $h_i$ makes nearby activities less costly at the
margin, with the penalty for distance governed by $\gamma\geq0$.
We specify the stable cost function
\begin{equation}\label{eq:cost}
  C_i(q_i)=c_i^\top q_i+\tfrac12q_i^\top Hq_i,
  \quad c_{ia}=c_0+\gamma\|x_a-h_i\|^2,
  \quad H=\eta K^c+\nu I,
\end{equation}
where $c_0,\eta\geq0$, $\nu>0$, and $K^c_{ab}=\exp\{-\tau_c\|x_a-x_b\|^2/d\}$.
The linear term represents persistent comparative advantage.
The Gaussian congestion term makes simultaneous expansion in related
activities costly, while $\nu$ raises the marginal cost of
concentrating capacity at any one site.
Increasing $\gamma$ makes departures from a firm's capability centre
more expensive.
Firms differentiate because their capability centres give them
different costs of responding to the same opportunity.

Equations~\eqref{eq:revenue} and~\eqref{eq:cost} give the two terms of
\eqref{eq:firm-problem} their specific form.
For the equilibrium argument, allow a symmetric firm-specific matrix $H_i$;
the common Gaussian matrix in \eqref{eq:cost} is the specialization used below.
The following conditions summarize the economic discipline on this problem.

\begin{assumption}[Demand access and expansion costs]\label{ass:primitives}
  Outside appeal satisfies $b_\ell>0$, opportunity values and appeal
  coefficients
  satisfy $r_\ell\geq0$ and $K_{\ell a}\geq0$, and linear costs
  satisfy $c_{ia}\geq0$.
  Each symmetric cost matrix obeys $H_i\succeq\nu I$ for a common $\nu>0$.
\end{assumption}

Positive outside appeal represents customer alternatives beyond the
modeled firms
and guarantees defined shares even at zero capacity.
Nonnegative appeal and opportunity values give an additional activity
a nonnegative
gross contribution before its resource costs.
Nonnegative linear costs represent the baseline resources needed to
activate capacity
and make zero capacity a uniform payoff benchmark in the existence argument.
Positive cost curvature represents increasing marginal organizational
or resource
costs in every activity direction; it disciplines scale even when
demand is abundant.
The Gaussian congestion matrix is positive semidefinite, so the specification
$H=\eta K^c+\nu I$ satisfies the curvature condition.

The marginal incentive to expand is summarized by a vector field
$F:\mathcal Q\to\mathbb R^{Nm}$ whose $i$th block is the negative
own-payoff gradient.
Its coordinates are marginal expansion cost less marginal operating revenue:
\begin{equation}\label{eq:field}
  F_{ia}(q)=c_{ia}+(H_i q_i)_a-
  \sum_\ell r_\ell\frac{B_{i\ell}(q_{-i})}
  {[A_{i\ell}(q_i)+B_{i\ell}(q_{-i})]^2}K_{\ell a}.
\end{equation}
A negative coordinate is an incentive to add the corresponding capacity.
At zero capacity a positive coordinate instead makes entry unprofitable.

The rational-share technology permits an exact account of every
unilateral deviation.
At one node write own appeal as $a$, rival-plus-outside appeal as
$B$, and $D=a+B$.
Changing own appeal by $s$, while retaining feasibility, gives
\begin{equation}\label{eq:share-expansion}
  \frac{r(a+s)}{D+s}-\frac{ra}{D}
  =\frac{rBs}{D^2}-\frac{rBs^2}{D^2(D+s)}.
\end{equation}
The first term values the deviation at current marginal revenue; the second is a
nonnegative deduction generated by share saturation.
Both denominators are positive, including for feasible deviations
from the boundary.
Summing over nodes and expanding the quadratic cost therefore gives,
for $d_i=z_i-q_i$,
\begin{equation}\label{eq:payoff-expansion}
  \pi_i(z_i,q_{-i})-\pi_i(q)
  =-F_i(q)^\top d_i-\tfrac12d_i^\top H_i d_i-\mathcal R_i(q,d_i),
  \qquad \mathcal R_i(q,d_i)\geq0.
\end{equation}
This exact finite-change identity shows why a nonpositive first-order
gain rules out a profitable deviation of any size.

The equilibrium result specializes established contest and
variational-inequality reasoning
\citep{xu_zenou_zhou_conflict_2022,parise_ozdaglar_network_games_2019}.
The model-specific argument verifies curvature through the activity map and
identifies the cost-based modulus used below to bound numerical equilibrium error.

\begin{proposition}[Unique equilibrium and the symmetry
  boundary]\label{prop:equilibrium}
  Under Assumption~\ref{ass:primitives}, the rational-share capacity game
  has a unique Nash equilibrium on $\mathcal Q$.
  Identical firm primitives imply identical equilibrium capacity vectors.
  The independent-firm economy, obtained by setting $B_{i\ell}=b_\ell$,
  also has a unique equilibrium, consisting of the firms' individual optima.
\end{proposition}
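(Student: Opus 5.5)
The plan is to recast equilibrium as a variational inequality for the field $F$ in \eqref{eq:field} and show that $F$ is strongly monotone on $\mathcal Q$ with modulus $\nu$. The standard result for strongly monotone continuous maps on closed convex sets \citep{rosen_concave_1965,parise_ozdaglar_network_games_2019} then gives existence and uniqueness of a solution.

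\emph{Step 1: equilibria are exactly the solutions of the variational inequality.} We need $q^*\in\mathcal Q$ with $F(q^*)^\top(z-q^*)\geq0$ for all $z\in\mathcal Q$. Sufficiency comes from the exact identity \eqref{eq:payoff-expansion}. If $F_i(q^*)^\top d_i\geq0$ for every feasible $d_i$, then, because $H_i\succeq\nu I$ and $\mathcal R_i\geq0$, every unilateral deviation lowers payoff. Necessity is the usual first-order condition on the cone $\mathbb R^m_+$. Since $b_\ell>0$ keeps every denominator positive, $F$ is $C^1$ on a neighbourhood of $\mathcal Q$.

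\emph{Step 2: strong monotonicity.} Split $F=F^{\mathrm{cost}}+F^{\mathrm{rev}}$. The cost block contributes $(H_i(q_i-q_i'))_a$, so
\[
(F^{\mathrm{cost}}(q)-F^{\mathrm{cost}}(q'))^\top(q-q')\geq\nu\|q-q'\|^2
\]
by Assumption~\ref{ass:primitives}. For the revenue part, write $F^{\mathrm{rev}}_i(q)=K^\top g_i(Kq_1,\dots,Kq_N)$. Here $g$ acts node by node on the appeal vector $(A_{1\ell},\dots,A_{N\ell})$ through
\[
g_{j\ell}=-r_\ell(D_\ell-A_{j\ell})/D_\ell^2,\qquad D_\ell=b_\ell+\textstyle\sum_kA_{k\ell}.
\]
Monotonicity of $g$ at each node transfers to $F^{\mathrm{rev}}$, because composing with the block-diagonal linear map $q\mapsto(Kq_i)_i$ and its adjoint preserves monotonicity. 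At one node the Jacobian is $(r/D^3)\,[\,D I+D\mathbf 1\mathbf 1^\top-2A\mathbf 1^\top\,]$. Its symmetric part is PSD if
\[
D\|x\|^2+Ds^2-2s\,(A^\top x)\geq0,\qquad s=\mathbf 1^\top x .
\]
To prove this, bound $2s\,A^\top x\leq Ds^2+(A^\top x)^2/D$. By Cauchy--Schwarz, $(A^\top x)^2\leq(\sum_kA_k)(\sum_kA_kx_k^2)$. Then $\sum_kA_k<D$ and $A_k\leq D$ give $(A^\top x)^2/D\leq D\|x\|^2$. This unit-exponent contest curvature check is the step I expect to be the main obstacle. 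It is where $b_\ell>0$, $r_\ell\geq0$ and $K\geq0$ are used, and if the crude bound were too loose I would instead diagonalize the rank-two perturbation explicitly. Integrating the Jacobian along segments in $\mathcal Q$ then yields
\[
(F(q)-F(q'))^\top(q-q')\geq\nu\|q-q'\|^2 .
\]

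\emph{Step 3: conclusions.} Strong monotonicity makes $F$ coercive on the cone, so the variational inequality has a solution. Uniqueness follows because two solutions $q,q'$ would give $0\geq\nu\|q-q'\|^2$. For symmetry, suppose firms $i$ and $j$ have identical primitives ($c_i=c_j$, $H_i=H_j$). Permuting their blocks maps $F$ to itself, so it maps the equilibrium to an equilibrium, and uniqueness forces $q_i^*=q_j^*$. For the independent economy, setting $B_{i\ell}=b_\ell$ decouples $F$ into blocks $F_i(q_i)$. The same computation with $N=1$ and $D=A+b$ shows each block is $\nu$-strongly monotone. Equivalently, each firm's objective is $\nu$-strongly concave, because $a\mapsto ra/(a+b)$ is concave and $C_i$ is strongly convex. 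Each firm therefore has a unique optimum, and the unique equilibrium of the product game is the vector of these individual optima.
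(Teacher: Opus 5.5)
Your proof is correct and has the same architecture as the paper's: equilibrium as the variational inequality via the exact payoff expansion, $\nu$-strong monotonicity from demand monotonicity plus cost curvature, uniqueness by adding the two VIs, symmetry by permutation invariance, and the independent arm as a product of strongly concave one-firm problems. It differs from the paper in two sub-steps.

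\emph{Node-level curvature.} You use $2s\,A^\top x\le Ds^2+(A^\top x)^2/D$ together with the weighted Cauchy--Schwarz bound $(A^\top x)^2\le(\sum_kA_k)(\sum_kA_kx_k^2)\le D^2\|x\|^2$. This gives a bracket of at least $D\|x\|^2-(A^\top x)^2/D\ge0$. The chain closes, so the diagonalization fallback you mention is not needed. The paper instead writes the same bracket exactly as $b(V+U^2)+\sum_i a_i[\sum_{j\ne i}u_j^2+(U-u_i)^2]$. That identity loses no slack, separates the outside-option and per-firm contributions, and matches the marginal-share lemma of \citet{ewerhart_quartieri_contests_2020}. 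Your inequality chain is quicker to find but shows none of this structure.

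\emph{Existence.} You get existence from coercivity of a continuous strongly monotone map on a closed convex cone. Here the right reference is the existence theorem for coercive VIs on unbounded closed convex sets, not \citet{rosen_concave_1965}, which assumes compact strategy sets. The paper instead bounds $\pi_i\le R_\Sigma-\tfrac\nu2\|q_i\|^2$ and shows that any strategy outside radius $\sqrt{2R_\Sigma/\nu}$ is dominated by zero, uniformly in rivals. It then applies Brouwer to the continuous best-response map on the resulting compact product. Your route is shorter and never uses $c_{ia}\ge0$, so it shows that hypothesis is inessential here. The paper's route is self-contained apart from Brouwer and gives an explicit a priori capacity bound.
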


\begin{proof}
  \emph{Nash and the variational inequality.}
  The exact payoff expansion shows that a profile satisfying the
  variational inequality
  \begin{equation}\label{eq:vi}
    \langle F(q^*),q-q^*\rangle\geq0\qquad\text{for every }q\in\mathcal Q
  \end{equation}
  is a Nash equilibrium.
  Conversely, compare a best response with a step of length $t>0$
  toward any feasible
  alternative, divide its payoff inequality by $t$, and let $t$
  decrease to zero.
  The two curvature terms are of order $t^2$, giving the block inequalities in
  \eqref{eq:vi}; summing them gives the joint inequality.

  \emph{Joint monotonicity.}
  The demand contribution to this field is monotone.
  To see the essential algebra, at one node put $D=b+\sum_i a_i$ and let $u_i$
  be a directional change in appeal, with $U=\sum_i u_i$ and $V=\sum_i u_i^2$.
  The directional derivative of the negative marginal-revenue field
  has quadratic form
  \begin{equation}\label{eq:node-monotonicity}
    \frac r{D^3}\left[D(V+U^2)-2U\sum_i a_i u_i\right]\geq0,
  \end{equation}
  because its bracket equals
  $b(V+U^2)+\sum_i a_i[\sum_{j\ne i}u_j^2+(U-u_i)^2]$.
  This contest-curvature property follows the marginal-share argument of
  \citet{ewerhart_quartieri_contests_2020}; Appendix~\ref{app:equilibrium}
  gives its connection to the capacity map.
  Integrating along the feasible segment between profiles and adding
  cost curvature yields
  \begin{equation}\label{eq:strong-monotonicity}
    \langle F(q)-F(v),q-v\rangle
    =\mathcal M(q,v)+\sum_i(q_i-v_i)^\top H_i(q_i-v_i)
    \geq\nu\|q-v\|^2,
  \end{equation}
  where $\mathcal M(q,v)\geq0$ is the integrated demand contribution.

  \emph{Existence.}
  Revenue is at most $R_\Sigma=\sum_\ell r_\ell$ and payoff is at most
  $R_\Sigma-\nu\|q_i\|^2/2$.
  Zero capacity earns zero, so capacities outside any ball of radius
  $R>\sqrt{2R_\Sigma/\nu}$ cannot be optimal, uniformly in rivals.
  On the product of the nonnegative radius-$R$ balls, continuous payoffs and
  strictly concave own problems admit a fixed point of the best-response map.
  The uniform bound makes this an equilibrium on the whole orthant.

  \emph{Uniqueness and symmetry.}
  If $q^*$ and $v^*$ both solve \eqref{eq:vi}, adding their inequalities gives
  $\langle F(q^*)-F(v^*),q^*-v^*\rangle\leq0$.
  Equation~\eqref{eq:strong-monotonicity} then implies $q^*=v^*$.
  A firm permutation preserves equilibrium under identical
  primitives, so uniqueness
  requires equal firm blocks.
  Independent firms have strictly concave, coercive objectives, and their unique
  optima form the unique product equilibrium.
\end{proof}

The argument separates two economic roles.
Rational sharing preserves monotonicity as rivals expand, while
increasing own costs
supply the strict curvature that rules out multiple equilibria.
Competition alone cannot create asymmetric specialization among
identical firms in
this economy; heterogeneous capabilities permit different responses
to a common opportunity.
Appendices~\ref{app:payoff}--\ref{app:independent-proof} supply the
full derivation.

\subsection{Moving opportunities and rational reallocation}

Economic motion enters through a change in $r_t$ observed before
firms choose their dated positions.
Activity sites, capabilities, costs, outside appeal, and the
population of firms remain fixed.
The strategic economy therefore moves from $q^{\mathrm S,*}(r_0)$ to
$q^{\mathrm S,*}(r_1)$, while the independent-firm economy moves
between its corresponding optima $q^{\mathrm M,*}(r_0)$ and
$q^{\mathrm M,*}(r_1)$.
With unchanged opportunities, uniqueness implies unchanged
equilibrium allocations within each arm.
The independent counterfactual removes rivals from every share
denominator while preserving each firm's access to the full
opportunity mass.
It measures the total effect of removing rivalry, including induced
changes in scale and marginal cost.

To isolate the restriction imposed by optimizing behavior, fix a
firm's two chosen positions $q_i^0,q_i^1$ and compare their revenues
in an arbitrary environment.
Define the new position's operating-revenue advantage by
\begin{equation}\label{eq:advantage}
  g_i(r,B)=R_i(q_i^1;r,B)-R_i(q_i^0;r,B).
\end{equation}
Positive $g_i$ means that the new position earns more operating
revenue than the old position at the specified opportunities and
rival participation.
The quantity of interest is how that advantage changes as the entire
environment changes.

\begin{assumption}[Stable capability technology and feasible
  menu]\label{ass:stable-menu}
  The firm's cost function is unchanged, $C_{i0}=C_{i1}=C_i$,
  and both endpoint choices $q_i^0,q_i^1$ are feasible at both dates.
\end{assumption}

Stable costs isolate the revaluation of activities from changes in
the technology
or capabilities needed to undertake them.
Mutual feasibility keeps the comparison on a common menu: the firm could have
chosen either position in either environment.
Together these conditions let us compare the two choices directly,
without estimating their cost difference.

\begin{proposition}[Rational reallocation under moving
  opportunities]\label{prop:reallocation}
  Maintain Assumption~\ref{ass:stable-menu}.
  If each choice maximizes payoff in its respective environment
  $(r_0,B_0)$ or $(r_1,B_1)$, then
  \begin{equation}\label{eq:full}
    J_i^{\mathrm{full}}=g_i(r_1,B_1)-g_i(r_0,B_0)\geq0.
  \end{equation}
  For choices with payoff regrets bounded by
  $\varepsilon_{i0}$ and $\varepsilon_{i1}$, the lower bound is
  $-(\varepsilon_{i0}+\varepsilon_{i1})$.
\end{proposition}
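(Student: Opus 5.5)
The argument is a revealed-preference exchange, in the spirit of \citet{varian_optimizing_1985}. We write the two optimality inequalities and add them so that the stable cost function cancels. No structure of the rational-share technology beyond the additive separability of payoff into revenue minus cost is needed.

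First, write payoff in environment $(r_t,B_t)$ as $\pi_i(q_i;r_t,B_t)=R_i(q_i;r_t,B_t)-C_i(q_i)$. Assumption~\ref{ass:stable-menu} gives $C_{i0}=C_{i1}=C_i$. Mutual feasibility lets each endpoint be compared against the other in both environments. Optimality of $q_i^1$ at date $1$ against the feasible alternative $q_i^0$ gives
\[
R_i(q_i^1;r_1,B_1)-C_i(q_i^1)\geq R_i(q_i^0;r_1,B_1)-C_i(q_i^0),
\]
that is, $g_i(r_1,B_1)\geq C_i(q_i^1)-C_i(q_i^0)$. Optimality of $q_i^0$ at date $0$ against the feasible alternative $q_i^1$ gives
\[
R_i(q_i^0;r_0,B_0)-C_i(q_i^0)\geq R_i(q_i^1;r_0,B_0)-C_i(q_i^1),
\]
that is, $C_i(q_i^1)-C_i(q_i^0)\geq g_i(r_0,B_0)$. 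Chaining the two inequalities yields $g_i(r_1,B_1)\geq g_i(r_0,B_0)$, so $J_i^{\mathrm{full}}\geq0$.

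For the approximate version, take payoff regret $\varepsilon_{it}$ to mean that the choice at date $t$ is within $\varepsilon_{it}$ of the supremum over the feasible menu. It is then in particular within $\varepsilon_{it}$ of the payoff of the other endpoint, which is feasible. The two displays then hold with $-\varepsilon_{i1}$ and $-\varepsilon_{i0}$ on their right-hand sides. Adding them, the cost difference again cancels, leaving $J_i^{\mathrm{full}}\geq-(\varepsilon_{i0}+\varepsilon_{i1})$.

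The main point of care is bookkeeping rather than analysis. The cancellation requires the \emph{same} cost evaluated at the \emph{same} two points in both inequalities; this is exactly what stable costs provide. All revenue terms must be finite, which holds because $B_{i\ell}\geq b_\ell>0$ keeps every share denominator positive, so no infinite terms can spoil the subtraction. The rival field $B_t$ is treated as a fixed parameter of each date's environment, as in \eqref{eq:revenue}. Hence no equilibrium property, and in particular not Proposition~\ref{prop:equilibrium}, is required. The result applies equally to strategic equilibria, where $B_t$ is the rivals' equilibrium field, and to the independent arm, where $B_t=b$.
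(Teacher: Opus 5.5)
Your proposal is correct and is essentially the paper's proof. Like the paper, you compare each chosen position with the other endpoint at both dates, using mutual feasibility and allowances $\varepsilon_{i0},\varepsilon_{i1}$, and add the two inequalities so that the stable cost difference cancels; your intermediate chaining through $C_i(q_i^1)-C_i(q_i^0)$ is the same cost-interval reading the paper gives in its appendix.
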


\begin{proof}
  Mutual feasibility permits each chosen position to be compared with
  the other at both dates:
  \begin{align*}
    R_i(q_i^0;r_0,B_0)-C_i(q_i^0)
    &\geq R_i(q_i^1;r_0,B_0)-C_i(q_i^1)-\varepsilon_{i0},\\
    R_i(q_i^1;r_1,B_1)-C_i(q_i^1)
    &\geq R_i(q_i^0;r_1,B_1)-C_i(q_i^0)-\varepsilon_{i1}.
  \end{align*}
  The same cost difference appears with opposite signs.
  Adding and rearranging gives
  \begin{equation}\label{eq:cost-cancellation}
    \underbrace{[R_i(q_i^1;r_1,B_1)-R_i(q_i^0;r_1,B_1)]
    -[R_i(q_i^1;r_0,B_0)-R_i(q_i^0;r_0,B_0)]}_{J_i^{\mathrm{full}}}
    \geq-\varepsilon_{i0}-\varepsilon_{i1}.
  \end{equation}
  Exact optimization sets both allowances to zero.
\end{proof}

Stable costs isolate a change in incentives over a stable capability technology;
mutual feasibility keeps the choice menu fixed across environments.
Neither condition requires equal costs across firms or equal costs at
the two positions.
For exact choices, the two inequalities place the cost difference
$C_i(q_i^1)-C_i(q_i^0)$ between $g_i(r_0,B_0)$ and $g_i(r_1,B_1)$.
A costly new position must therefore acquire enough operating value
to justify its cost.
This interval interpretation explains why the restriction can be
assessed without
estimating the entire cost schedule.
The allowances measure foregone payoff, the economic error relevant
to the choice.
Appendix~\ref{app:choice} states the complete comparison.

The proposition applies to capacity choices.
The observation experiment separately checks which parts of the
restriction survive when capacity is normalized into a composition.

\subsection{Opportunity and crowding attribution}

The full environment changes along two dimensions.
To separate them, first change rivals at the old opportunity values
and then change opportunity values at the new rival configuration:
\begin{align}
  \mathcal C_{i0}&=g_i(r_0,B_1)-g_i(r_0,B_0),&
  \mathcal O_{i1}&=g_i(r_1,B_1)-g_i(r_0,B_1).\label{eq:forward}
\end{align}
The crowding component measures how rival movement changes the new
position's relative revenue holding opportunities fixed.
The opportunity component measures how the revaluation of demand
regions changes that relative revenue holding the final rival
configuration fixed.
Their units are the same operating-revenue units as $J_i^{\mathrm{full}}$.

\begin{proposition}[Opportunity--crowding
  decomposition]\label{prop:decomposition}
  Define the reverse-order components by
  $\mathcal O_{i0}=g_i(r_1,B_0)-g_i(r_0,B_0)$ and
  $\mathcal C_{i1}=g_i(r_1,B_1)-g_i(r_1,B_0)$.
  Then
  \begin{equation}\label{eq:decomposition}
    J_i^{\mathrm{full}}=\mathcal C_{i0}+\mathcal O_{i1}
    =\mathcal O_{i0}+\mathcal C_{i1},\qquad
    \mathcal C_{i1}-\mathcal C_{i0}=\mathcal O_{i1}-\mathcal O_{i0}.
  \end{equation}
\end{proposition}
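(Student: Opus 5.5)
The decomposition is a telescoping identity and uses no optimality or structural property of $g_i$. It needs only that $g_i$ is a well-defined real-valued function on the four environments $(r_0,B_0)$, $(r_0,B_1)$, $(r_1,B_0)$ and $(r_1,B_1)$. This holds because $b_\ell>0$ keeps every share denominator in \eqref{eq:revenue} positive for nonnegative capacities (Assumption~\ref{ass:primitives}). For brevity I write $g_{00}=g_i(r_0,B_0)$, $g_{01}=g_i(r_0,B_1)$, $g_{10}=g_i(r_1,B_0)$ and $g_{11}=g_i(r_1,B_1)$, so that $J_i^{\mathrm{full}}=g_{11}-g_{00}$ by \eqref{eq:full}.

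First, I substitute the forward components from \eqref{eq:forward}. This gives $\mathcal C_{i0}+\mathcal O_{i1}=(g_{01}-g_{00})+(g_{11}-g_{01})$. The intermediate value $g_{01}$ cancels, leaving $g_{11}-g_{00}=J_i^{\mathrm{full}}$.

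Second, I do the same with the reverse-order components defined in the statement. Here $\mathcal O_{i0}+\mathcal C_{i1}=(g_{10}-g_{00})+(g_{11}-g_{10})$. This time the corner $g_{10}$ cancels, and the sum again equals $J_i^{\mathrm{full}}$.

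Third, I equate the two expressions for $J_i^{\mathrm{full}}$ to get $\mathcal C_{i0}+\mathcal O_{i1}=\mathcal O_{i0}+\mathcal C_{i1}$. Rearranging gives $\mathcal C_{i1}-\mathcal C_{i0}=\mathcal O_{i1}-\mathcal O_{i0}$. As a check, both sides equal the mixed second difference $g_{11}-g_{10}-g_{01}+g_{00}$, which can be verified directly by expanding either side.

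None of these steps is a real obstacle. The main point of care is interpretive: the statement must not be read as asserting anything about the signs of the components. In particular, Proposition~\ref{prop:reallocation} controls only the total $J_i^{\mathrm{full}}$, not the individual components. The common interaction term $g_{11}-g_{10}-g_{01}+g_{00}$ is exactly what makes the attribution path dependent.
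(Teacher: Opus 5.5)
Your proof is correct and matches the paper's. Both insert and subtract the off-diagonal corners $g_i(r_0,B_1)$ and $g_i(r_1,B_0)$ to telescope each attribution path to $J_i^{\mathrm{full}}$, then equate the two sums; the paper's appendix likewise identifies the common difference as the mixed second difference $g_{11}-g_{10}-g_{01}+g_{00}$.
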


\begin{proof}
  Insert and subtract the value of the new position's advantage after
  only rivals have moved:
  \begin{align*}
    J_i^{\mathrm{full}}
    &=g_i(r_1,B_1)-g_i(r_0,B_1)+g_i(r_0,B_1)-g_i(r_0,B_0)\\
    &=\mathcal O_{i1}+\mathcal C_{i0}.
  \end{align*}
  Inserting $g_i(r_1,B_0)$ instead gives $\mathcal C_{i1}+\mathcal O_{i0}$.
  Equating the two sums gives the attribution-order identity.
\end{proof}

The full gain is invariant to the accounting path, but the component
attributions generally depend on the order of the changes.
A newly valuable region can become more crowded precisely because the
opportunity has improved.
Its crowding contribution can therefore be negative even when
entering it is the rational response.
For exact choices the restriction is $\mathcal C_{i0}\geq-\mathcal
O_{i1}$: favorable opportunity movement can compensate for unfavorable crowding.
In the independent-firm arm, the rival environment is constant and
both crowding components are identically zero.
For approximate choices, the same implication becomes
$\mathcal C_{i0}\geq-\mathcal O_{i1}-\varepsilon_{i0}-\varepsilon_{i1}$.

\subsection{An exact equilibrium example}

A two-activity example makes the distinction concrete.
Two identical firms each choose a unit-mass composition $(u,1-u)$,
with $u\in[0,1]$.
The appeal matrix has diagonal entries one and off-diagonal entries
$1/4$, outside appeal is one, and the common cost is
$C(u)=\frac{\mu_*}{2}[u^2+(1-u)^2]$, where $\mu_*=1182/6125$.
Opportunity weights reverse from $(4/5,1/5)$ to $(1/5,4/5)$.
Both firms choose $(3/4,1/4)$ initially and $(1/4,3/4)$ subsequently.
To verify global optimality, write $u_0=3/4$ and let $D_L(u),D_R(u)$
be the two share denominators when the rival stays at $u_0$.
Direct substitution gives the exact factorization
\begin{equation}\label{eq:example-factor}
  [\pi(u_0;r_0,u_0)-\pi(u;r_0,u_0)]D_L(u)D_R(u)
  =(u-u_0)^2\Phi_0(u).
\end{equation}
Here both denominators and
$\Phi_0(u)=1061199/784000-(5319/49000)u^2$ are strictly positive on $[0,1]$.
Consequently the payoff gap is nonnegative and vanishes only at $u_0$.
The new-date factorization is its reflection, replacing $u$ by $1-u$;
it proves a unique best response at $u_1=1/4$.
These global best responses against the displayed rival positions
establish dated Nash equilibria.
Appendix~\ref{app:example} gives the exact substitutions and positivity bounds.

\begin{table}[htbp]
  \centering
  \caption{Rational firms follow a common opportunity shift. Each row
    describes the same representative firm in a symmetric two-firm,
    two-activity simplex game. Gains are model operating revenue;
  choices have unit mass.}
  \label{tab:exact}
  \begin{tabular}{lcc}
    \toprule
    Economic object & Old environment & New environment \\
    \midrule
    Opportunity weights & $(4/5,1/5)$ & $(1/5,4/5)$ \\
    Each firm's composition & $(3/4,1/4)$ & $(1/4,3/4)$ \\
    \midrule
    Full reallocation gain & \multicolumn{2}{c}{$419/3150>0$} \\
    Crowding-first contribution & \multicolumn{2}{c}{$-4/315<0$} \\
    \bottomrule
  \end{tabular}
\end{table}

The newly attractive activity draws both firms, so each encounters
more competition at its new position.
Evaluated at the old opportunity values, the reallocation loses
relative revenue as rivals move.
Once the actual opportunity change is included, the full gain is
positive, as required by rational choice.
Table~\ref{tab:exact} thus separates an economic incentive to
reposition from a geometric instruction to move opposite to competitors.

The economics behind the arithmetic is ordinary.
When an improvement makes one activity markedly more valuable, a firm
with strong capability there can rationally expand into it even as the
activity becomes more contested, because the gain in opportunity value
exceeds what it loses to the firms arriving alongside it.
A firm less suited to that activity responds weakly, or redirects
toward a neighboring one it can serve more cheaply.
Competition in this environment reshapes how firms follow a common
opportunity through capability-dependent responses.
The example uses a fixed-scale simplex to make the logic transparent;
the following experiment allows both capacity and composition to adjust.

\section{Multi-agent computational equilibrium experiment}\label{sec:simulation}

\subsection{Agents, environment, and experimental treatments}

The experiment measures how rivalry changes equilibrium responses before asking
whether finite textual observation preserves the difference.
Each firm is an optimizing agent with persistent capability centre $h_i$ and
action $q_i\in\mathbb R_+^8$.
Firms know the contemporaneous opportunity vector $r_t$ and the economy's cost
and capability primitives when they choose simultaneously.
They interact through the rational-share denominator: one firm's capacity
reduces the opportunity share available to others.
The dated outcome is the unique Nash equilibrium in
Proposition~\ref{prop:equilibrium}.

The laboratory is deterministic conditional on these primitives.
We solve a separate equilibrium for each date; solver iterations are
only the numerical method.
Stochasticity enters subsequently through repeated finite-information samples.
Table~\ref{tab:treatments} distinguishes the economic comparisons.
Both arms share opportunities, geometry, capabilities, and costs; the
independent
arm deletes rival appeal while retaining each firm's access to the
full opportunity
mass, as defined in Section~\ref{sec:model}.

\begin{table}[htbp]
  \centering
  \caption{Rivalry and capabilities define the equilibrium comparisons}
  \label{tab:treatments}
  \begin{tabular}{p{0.17\linewidth}p{0.30\linewidth}p{0.43\linewidth}}
    \toprule
    Dimension & Settings & Economic comparison \\
    \midrule
    Rivalry & Independent; strategic & Effect of rival participation \\
    Capabilities & $\gamma=0$; $\gamma=2$ & Common versus
    heterogeneous response \\
    Opportunities & Constant; moving & Stationarity versus revaluation \\
    Market size & $N=6$; $N=100$ & Small versus dense economy \\
    \bottomrule
  \end{tabular}
  \par\smallskip
  Each combination uses the same eight activities and 64 demand nodes.
  The main comparison has $N=100$ and $\gamma=2$; all other economic
  parameters remain fixed across arms.
\end{table}

\subsection{The common opportunity shock}

The opportunity shock reallocates value across demand regions while preserving
its total mass. For node coordinate $z_\ell\in[0,1]$, let
\begin{equation}\label{eq:opportunity-path}
  r_{\ell t}=N\frac{\exp\{a_t(2z_\ell-1)\}}
  {\sum_k\exp\{a_t(2z_k-1)\}}.
\end{equation}
Negative $a_t$ favors the left end of the space; positive $a_t$
favors the right.
The four-date path reverses this tilt, keeping $\sum_\ell r_{\ell t}=N$.
The stationary control fixes $a_t=0$: each arm remains at its own equilibrium,
so changes tie at zero even though levels can differ.
Appendix~\ref{app:design} gives the tilt values and remaining parameters.

All \ppvalue{latent-transitions} transitions in the main strategic
economy have a positive full gain and a negative crowding-first contribution.
The crowding contribution stays negative when attribution reverses the
order of the changes.
Table~\ref{tab:dense} shows a mean opportunity contribution of
\ppnum[3]{dense-opportunity} against a crowding contribution of
\ppnum[3]{dense-crowding}, leaving a full gain of \ppnum[4]{dense-full}.
The dense economy therefore reproduces the exact example's economic
mechanism with heterogeneous firms and overlapping Gaussian demands.

% AUTO-GENERATED by pipeline.io.paper6.manuscript
\begin{table}[tbp]
\centering
\caption{Opportunity movement compensates for crowding in the strategic economy. Means cover \ppnum[0]{latent-transitions} latent firm-transitions in the \ppnum[0]{sim-firms}-firm heterogeneous economy. Units are model operating revenue; columns apply the two attribution orders, and the full gain is order invariant.}
\label{tab:dense}
\begin{tabular}{lrr}
\toprule
Component & Rivals first & Opportunities first \\
\midrule
Full reallocation gain & \ppnum[6]{dense-full} & \ppnum[6]{dense-full} \\
Crowding contribution & \ppnum[6]{dense-crowding} & \ppnum[6]{dense-crowding-second} \\
Opportunity contribution & \ppnum[6]{dense-opportunity} & \ppnum[6]{dense-opportunity-first} \\
\bottomrule
\end{tabular}
\end{table}

\subsection{Strategic and heterogeneous responses}

\begin{figure}[htbp]
  \centering
  \import{images/}{rivalry_components\ppfiguresuffix.pgf}
  \caption{Heterogeneous capabilities turn rivalry into
    differentiated repositioning. Bars decompose the strategic-minus-independent
    composition-change contrast into common and relative components,
    measured as weighted squared Euclidean distance on eight-site
    probability vectors. The \ppnum[0]{corpus-transitions}-transition
    observation layout weights
    firms equally. Reporting share is zero; common reporting attenuates
  both components proportionally.}
  \label{fig:components}
\end{figure}

To align the population decomposition with the subsequent measurement
experiment, we evaluate the main statistic on its
\ppnum[0]{corpus-transitions}-transition observation
layout and report the balanced \ppnum[0]{latent-transitions}-transition
economy alongside it.
The latent market contains the same \ppnum[0]{sim-firms} firms in both calculations.
Let $w_{it}=1/(N_{\rm obs}T_i)$ on this layout, where $T_i$ counts
firm $i$'s eligible transitions and $N_{\rm obs}$ counts firms.
These weights average transitions within firms and then firms equally.
To separate those margins, define $d_{it}=\Delta p_{it}^{\mathrm
S}-\Delta p_{it}^{\mathrm M}$ and the conditional weighted mean
$\bar{d}_t=\sum_i w_{it}d_{it}/W_t$, where $W_t=\sum_{i}w_{it}$.
The quantity $d_{it}$ is how firm $i$'s response to the opportunity
shift changes because it faces rivals, measured against the response it
would have chosen facing the same shift on its own.
A total contrast built from these differences can combine a shift shared
by all firms with differences in how individual firms adjust.
The squared composition contrast decomposes exactly as
\begin{equation}\label{eq:relative}
  \mathcal S=\sum_{it}w_{it}\|d_{it}\|^2
  =\underbrace{\sum_{t}W_t\|\bar{d}_t\|^2}_{\mathcal S_{\rm common}}
  +\underbrace{\sum_{it}w_{it}\|d_{it}-\bar{d}_t\|^2}_{\mathcal S_{\rm
  relative}}.
\end{equation}
The common component is rivalry shifting every firm's response in the
same direction.
The relative component is rivalry acting differently on different firms.

In the heterogeneous economy, the relative component is
\ppnum[5]{relative-component}, or \ppnum[1]{relative-share-pct}\% of
the total; the common component is \ppnum[6]{common-component}.
The balanced \ppnum[0]{latent-transitions}-transition economy gives a relative share of
\ppnum[1]{balanced-relative-share-pct}\%.
Under identical capabilities, the relative component is numerical
zero, consistent with equilibrium symmetry.
Rivalry interacts with capability differences, placing almost all incremental
response in the relative component of Figure~\ref{fig:components}.
The symmetric economy exhibits a common response.
The decomposition separates responses.
Appendix~\ref{app:relative-proof} derives the weighted identity.

\subsection{Numerical equilibrium certification}

Projected extragradient locates each dated equilibrium; independent residual
bounds determine acceptance.
An equilibrium error is economically relevant if it leaves a
profitable expansion
or contraction available to a firm.
For a feasible numerical profile $\widehat q$, the orthant
stationarity residual retains
all marginal error at an active coordinate and only incentives to
enter at an inactive one:
\begin{equation}\label{eq:residual}
  s_{ia}(\widehat q)=
  \begin{cases}
    F_{ia}(\widehat q),&\widehat q_{ia}>0,\\
    \min\{F_{ia}(\widehat q),0\},&\widehat q_{ia}=0.
  \end{cases}
\end{equation}
A positive $F_{ia}$ at zero capacity already discourages entry, so it
is not an optimality error.

\begin{proposition}[Equilibrium accuracy and unilateral payoff
  regret]\label{prop:certificate}
  Under Assumption~\ref{ass:primitives}, the strategic and
  independent operators each satisfy
  \begin{equation}\label{eq:certificate}
    \|\widehat q-q^*\|\leq\frac{\|s(\widehat q)\|}{\nu},\qquad
    \operatorname{Regret}_i(\widehat q)\leq\frac{\|s_i(\widehat q)\|^2}{2\nu},
  \end{equation}
  where regret is the supremum of the unilateral payoff improvement
  holding rivals fixed.
\end{proposition}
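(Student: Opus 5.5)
The distance bound will come from strong monotonicity~\eqref{eq:strong-monotonicity} combined with the variational inequality~\eqref{eq:vi}. The regret bound will come from the exact payoff expansion~\eqref{eq:payoff-expansion} combined with the curvature floor $H_i\succeq\nu I$ of Assumption~\ref{ass:primitives}. The independent operator is handled identically: its field is the same expression with $B_{i\ell}=b_\ell$, so it is block-diagonal. Its demand contribution is monotone block by block, because each block is the one-firm case of~\eqref{eq:node-monotonicity}. Hence the same modulus $\nu$ applies.

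\emph{Key inequality.} For any feasible $z$, the residual controls the first-order term:
\begin{equation*}
  \langle F(\widehat q),z-\widehat q\rangle\geq\langle s(\widehat q),z-\widehat q\rangle .
\end{equation*}
I will check this coordinate by coordinate. At an active coordinate $s_{ia}=F_{ia}$, so the two sides agree. At an inactive coordinate $\widehat q_{ia}=0$, so $z_{ia}-\widehat q_{ia}=z_{ia}\geq0$. Since $F_{ia}\geq\min\{F_{ia},0\}=s_{ia}$, the inequality holds there too. The same argument works blockwise with $z_{-i}=\widehat q_{-i}$.

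\emph{Distance.} I will take $z=q^*$ in the key inequality. By~\eqref{eq:vi} at $q^*$ with test point $\widehat q$, we have $\langle F(q^*),\widehat q-q^*\rangle\geq0$. Strong monotonicity~\eqref{eq:strong-monotonicity} then gives
\begin{align*}
  \nu\|\widehat q-q^*\|^2
  &\leq\langle F(\widehat q)-F(q^*),\widehat q-q^*\rangle\\
  &\leq\langle F(\widehat q),\widehat q-q^*\rangle
  \leq\langle s(\widehat q),\widehat q-q^*\rangle
  \leq\|s(\widehat q)\|\,\|\widehat q-q^*\|.
\end{align*}
Dividing by $\|\widehat q-q^*\|$ yields the first bound. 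If $\widehat q=q^*$ the bound is trivial.

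\emph{Regret.} Fix a firm $i$ and an arbitrary feasible $z_i$, and set $d_i=z_i-\widehat q_i$. By~\eqref{eq:payoff-expansion}, $\mathcal R_i\geq0$, and $d_i^\top H_id_i\geq\nu\|d_i\|^2$, we get
\begin{equation*}
  \pi_i(z_i,\widehat q_{-i})-\pi_i(\widehat q)\leq -F_i(\widehat q)^\top d_i-\tfrac{\nu}{2}\|d_i\|^2 .
\end{equation*}
The blockwise key inequality gives $-F_i^\top d_i\leq -s_i^\top d_i\leq\|s_i\|\|d_i\|$. Maximizing $\|s_i\|x-\tfrac{\nu}{2}x^2$ over $x\geq0$ gives $\|s_i\|^2/(2\nu)$. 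Taking the supremum over $z_i$ bounds the regret.

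\emph{Main obstacle.} The main obstacle is making sure~\eqref{eq:payoff-expansion} and~\eqref{eq:strong-monotonicity} hold for arbitrary feasible points in both operators, not only near equilibrium. That is, the saturation remainder must be nonnegative for every feasible deviation, which holds because the denominators stay positive under $b_\ell>0$. The monotonicity integration must also use the feasible segment, which is valid because the orthant is convex. For the independent arm I would re-derive the one-node share expansion with $B=b_\ell$ fixed, so that the same identity and sign hold.
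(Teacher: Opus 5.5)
Your proposal is correct and follows essentially the same route as the paper. Your coordinatewise check of $\langle F(\widehat q),z-\widehat q\rangle\geq\langle s(\widehat q),z-\widehat q\rangle$ is the paper's approximate variational inequality, which the paper writes via the nonnegative complementary correction $n=F(\widehat q)-s$; your distance chain, your maximization of $\|s_i\|u-\tfrac{\nu}{2}u^2$ from the exact payoff expansion, and your treatment of the independent operator through the one-firm node inequality (equivalent to the paper's block bound $DF_i^{\mathrm M}\succeq\nu I$) all match the paper's argument.
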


\begin{proof}
  For $z\geq0$, the inactive-coordinate correction has nonnegative pairing with
  $z-\widehat q$, hence
  $\langle F(\widehat q),z-\widehat q\rangle
  \geq-\|s\|\|z-\widehat q\|$.
  Set $z=q^*$ and combine with the exact equilibrium VI and strong
  monotonicity to obtain
  $\nu\|\widehat q-q^*\|^2\leq\|s\|\|\widehat q-q^*\|$.
  Cancellation gives the first bound, with zero error immediate.
  For a unilateral deviation of length $u$,
  \eqref{eq:payoff-expansion} bounds the gain by
  $\|s_i\|u-\nu u^2/2\leq\|s_i\|^2/(2\nu)$, proving the second.
\end{proof}

The first bound controls capacity error, and the second bounds net payoff
left unexploited by the numerical choice.
Every reported state passes these independently evaluated conditions.
Solutions from four starting profiles agree within the certified
error radii.
Appendix~\ref{app:certificates} gives the complete proof and the
acceptance thresholds.

\subsection{Observation map and stochastic measurement}

The observation experiment holds the economic states fixed and follows the
same sequence for each arm $k\in\{\mathrm S,\mathrm M\}$:
\begin{equation}\label{eq:experiment-sequence}
  \underbrace{r_t}_{\text{opportunity}}
  \longrightarrow\underbrace{q_{it}^{*,k}}_{\text{equilibrium}}
  \longrightarrow\underbrace{p_{it}^k}_{\text{composition}}
  \longrightarrow\underbrace{\widetilde p_{it}^k}_{\text{reporting}}
  \longrightarrow\underbrace{\widehat p_{it,b}^{\,k}}_{\text{sample}}.
\end{equation}
Here $b$ indexes the sampled panel and $k$ its generating world.
The first three mappings are deterministic conditional on the primitives and
reporting rule; randomness enters only in the final article sample.
Economic scale and observed composition are distinct measurement margins.
Normalizing $q_i$ to $p_i=q_i/S_i$ removes any change that multiplies
all of a firm's capacities by the same factor.
We therefore compare the two models separately in capacity and
probability-vector units.
Their mean squared response separations are
\ppnum[4]{capacity-separation} in capacity units and
\ppnum[5]{composition-separation} in composition units.
The positive composition separation establishes that rivalry changes
the allocation across activities as well as total scale in this economy.
Because the units differ, their numerical ratio has no invariant
information-retention interpretation.

Numerical accuracy must also survive the normalization itself.
Small total mass makes a composition sensitive to small capacity errors, which
motivates the positive-mass condition in the following bound.

\begin{proposition}[Observation error under
  normalization]\label{prop:normalization}
  Suppose $\|\widehat q_i-q_i^*\|\leq e$ and
  $\widehat S_i=\mathbf1^\top\widehat q_i>\sqrt m\,e$.
  Then $S_i^*>0$ and
  \begin{equation}\label{eq:normalization-bound}
    \left\|\frac{\widehat q_i}{\widehat S_i}-\frac{q_i^*}{S_i^*}\right\|
    \leq\frac{2\sqrt m\,e}{\widehat S_i-\sqrt m\,e}.
  \end{equation}
\end{proposition}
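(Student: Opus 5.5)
The plan is to control the two total masses first and then split the normalization error into a composition term and a scale term.

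\emph{Mass control.} Since $S=\mathbf1^\top q$ and $\|\mathbf1\|=\sqrt m$, Cauchy--Schwarz gives
\[
|\widehat S_i-S_i^*|=|\mathbf1^\top(\widehat q_i-q_i^*)|\leq\sqrt m\,e .
\]
Hence $S_i^*\geq\widehat S_i-\sqrt m\,e>0$ by hypothesis. This proves the first claim, so $q_i^*/S_i^*$ is well defined, and it also supplies the lower bound on the exact mass that will appear in the denominator.

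\emph{Splitting the error.} I would insert the cross term $q_i^*/\widehat S_i$:
\[
\frac{\widehat q_i}{\widehat S_i}-\frac{q_i^*}{S_i^*}
=\frac{\widehat q_i-q_i^*}{\widehat S_i}
+q_i^*\Bigl(\frac1{\widehat S_i}-\frac1{S_i^*}\Bigr)
=\frac{\widehat q_i-q_i^*}{\widehat S_i}
+\frac{q_i^*}{S_i^*}\cdot\frac{S_i^*-\widehat S_i}{\widehat S_i}.
\]
The first term has norm at most $e/\widehat S_i$. For the second term, note that $q_i^*\geq0$ lies in the orthant, so $\|q_i^*\|\leq\mathbf1^\top q_i^*=S_i^*$ and therefore $\|q_i^*/S_i^*\|\leq1$. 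Its norm is then at most $\sqrt m\,e/\widehat S_i$. Together these give $(1+\sqrt m)e/\widehat S_i\leq 2\sqrt m\,e/\widehat S_i$, using $m\geq1$. This is already below the stated bound $2\sqrt m\,e/(\widehat S_i-\sqrt m\,e)$, because $\widehat S_i>\widehat S_i-\sqrt m\,e>0$.

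\emph{Main obstacle.} The only delicate step is choosing the cross term so that the nonnegativity of $q_i^*$, rather than of the numerical iterate, bounds the scale term. The unit-norm fact $\|q/S\|\leq1$ needs a nonnegative vector, and the solver's iterate $\widehat q_i$ is not assumed to be nonnegative here. If one instead pivots on $\widehat q_i/S_i^*$, the same argument still goes through provided $\widehat q_i$ is feasible (so $\widehat q_i\geq0$). In that version the denominator becomes $S_i^*\geq\widehat S_i-\sqrt m\,e$, and the paper's displayed bound follows in exactly that form, with $(1+\sqrt m)e\leq2\sqrt m\,e$.

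I would present the proof along this second route, since the numerical profile is feasible by construction in the certificate setting and that route reproduces the stated denominator. Failing that, I would fall back on the first decomposition, whose bound is stronger.
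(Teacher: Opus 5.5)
Your proof is correct. The route you say you would present, pivoting on $\widehat q_i/S_i^*$ and using $\|\widehat q_i\|\le\widehat S_i$ for the feasible iterate, is exactly the paper's argument. It uses the same Cauchy--Schwarz mass bound, the same split $(\widehat q_i-q_i^*)/S_i^*+(1/\widehat S_i-1/S_i^*)\widehat q_i$, and the same last step $(1+\sqrt m)e/S_i^*\le 2\sqrt m\,e/(\widehat S_i-\sqrt m\,e)$; the paper's appendix likewise takes both $\widehat q_i$ and $q_i^*$ nonnegative.

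Your first decomposition, pivoting on $q_i^*/\widehat S_i$, is a genuinely different and slightly better variant. It needs nonnegativity only of the exact equilibrium block, which is automatic because $q^*\in\mathcal Q$. So it proves the proposition as literally stated, without assuming the numerical profile is feasible. It also gives the sharper computable bound $(1+\sqrt m)e/\widehat S_i$, whose denominator does not give up the $\sqrt m\,e$ margin. The hypothesis $\widehat S_i>\sqrt m\,e$ is then used only to make both masses positive, so that the exact composition is defined. Since $\widehat S_i>\widehat S_i-\sqrt m\,e>0$, that certificate is never weaker than the displayed one. You could therefore lead with the first route and deduce the stated inequality from it, exactly as you did.
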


\begin{proof}
  Cauchy--Schwarz gives $|\widehat S_i-S_i^*|\leq\sqrt m\,e$, so
  $S_i^*\geq\widehat S_i-\sqrt m\,e>0$.
  Split the normalized difference as
  \[
    \frac{\widehat q_i-q_i^*}{S_i^*}
    +\left(\frac1{\widehat S_i}-\frac1{S_i^*}\right)\widehat q_i.
  \]
  Nonnegative capacity implies $\|\widehat q_i\|\leq\widehat S_i$.
  The triangle inequality therefore bounds its norm by
  $(e+\sqrt m\,e)/S_i^*$, which implies
  \eqref{eq:normalization-bound} because $m\geq1$.
\end{proof}

A joint capacity-error certificate bounds every firm's block error and therefore
supplies a conservative $e$ for this proposition.
Appendix~\ref{app:observation} gives the full projection and mass argument.
This bound controls numerical error in the composition; the
population difference
between capacity and composition is a separate economic measurement margin.

Public reporting can mix the firm's composition with a common account
of the opportunity environment.
We model that channel using a reporting profile obtained by mapping
the same demand weights into activity space:
\begin{equation}\label{eq:reporting}
  \widetilde p_{it}^k=(1-\rho)p_{it}^k+\rho H_t^{\rm rep},\qquad
  H_{at}^{\rm rep}=\frac{\sum_\ell K_{\ell a}r_{\ell t}}
  {\sum_b\sum_\ell K_{\ell b}r_{\ell t}},\quad k\in\{\mathrm S,\mathrm M\}.
\end{equation}
The reporting shares are $\rho=0,0.2,0.4$.
This known common mixture directs some observed attention toward the
opportunity state independently of the firm's own allocation.
It is applied identically to both model arms and to their generating processes.
As a result, it attenuates their difference by $1-\rho$ and its
squared norm by $(1-\rho)^2$.
At reporting shares $0.2$ and $0.4$, composition-response separation is
consequently 64 and 36 per cent of its uncontaminated value.

\subsection{Finite-sample recoverability}

Each of the \ppvalue{panels} panels per world and reporting level samples
articles from fixed equilibrium paths using the corpus's counts and family
incidence.
Shared families couple draws across profiles, and overlapping firm-years reuse
the same sample; Appendix~\ref{app:design} specifies the inverse-CDF
construction.

We compare each sampled composition change with the two models'
predicted changes,
using squared Euclidean error on the eight-site probability vector.
A positive difference between independent-model and strategic-model loss selects
strategy; a negative difference selects the independent arm.
The comparison uses the equal-firm weights in \eqref{eq:relative}.
It selects the generating model in every main-comparison panel at all three
reporting shares.
The stationary control produces ties because both predicted changes are zero.
Appendix~\ref{app:design} reports the selection frequencies and Monte
Carlo uncertainty.
The competing predictions receive the true economic primitives and
reporting map, so this experiment isolates how much information finite
measurement retains.

The loss comparison establishes recovery of the total contrast; the population
decomposition separately identifies its predominantly heterogeneous composition.
Under symmetry, selection reflects the common response.

\section{Data and empirical design}\label{sec:data}

\subsection{Information position and the empirical question}

The corpus provides an empirical counterpart to the distinction between
persistent relative structure and changing firm position.
Its empirical question is whether a firm's peer representation
remains informative as its information position changes.
The model treats economic position as an allocation across
activities; the data describe the public information associated with a firm.
We denote the latter population distribution by $\mathcal P_{it}$ and its
empirical article distribution by $\widehat{\mathcal P}_{it}$.
A frozen language model maps each article into a vector that encodes
textual similarity, and each vector is normalized to unit length
before distances are computed.
For the $M_{it}$ eligible articles associated with firm $i$ in year $t$,
\begin{equation}\label{eq:articles}
  \widehat{\mathcal P}_{it}=\frac1{M_{it}}\sum_{k=1}^{M_{it}}\delta_{e_{ikt}},
  \qquad e_{ikt}=\frac{\operatorname{Encoder}(\operatorname{article}_{ikt})}
  {\|\operatorname{Encoder}(\operatorname{article}_{ikt})\|_2}.
\end{equation}
We use Qwen3-Embedding-4B as the main representation
\citep{qwen3_embedding_2025}.
The same encoder is used across dates, so changes in a firm's
distribution arise from changes in its article cloud under that representation.
Appendix~\ref{app:representation-ladder} repeats the reconstruction
across a ladder of encoders, widths, and training vintages.
Article frequency determines sampling support; it is not used as an
estimate of economic capacity.

Comparing distributions requires matching whole clouds, which retains
spread and internal composition beyond their centroids.
The rooted quadratic Wasserstein distance is the smallest
root-mean-square displacement needed to align their probability mass:
\begin{equation}\label{eq:wasserstein}
  W_2(U,V)=\left[\inf_{\pi\in\Pi(U,V)}
  \int\|x-y\|_2^2\,\mathrm d\pi(x,y)\right]^{1/2}.
\end{equation}
Here $\Pi(U,V)$ contains couplings with marginals $U$ and $V$.
For equally weighted clouds of the same size, the computation reduces
to a minimum-cost assignment of points.
The distance is measured in embedding-space displacement units and
retains differences in spread and internal composition that can
disappear in a centroid comparison.
It is a measurement of informational separation, with its economic
meaning supplied by the representation and the comparison being made.

\subsection{Frozen peer reconstruction and its estimand}

The benchmark combines peers to represent the firm.
Following target-anchored barycentric reconstruction, we align each
peer's training cloud to the target's training cloud and then choose
nonnegative weights summing to one to minimize the target's
reconstruction error \citep{gawronsky_spatial_2027}.
Writing $y_{jk}^{(i)}$ for peer $j$'s point aligned to target point
$k$, the fitted
support is $\widehat z_{ik}=\sum_{j\ne i}\widehat\alpha_{ij}y_{jk}^{(i)}$.
The weights $\widehat\alpha_{ij}$ minimize the squared error of the
aligned reconstruction
on the training cloud; Appendix~\ref{app:empirical} states the
optimization program.
The reconstruction is the equally weighted distribution over these
fitted support points.
The alignments are fixed before the weights are optimized, making
this a target-specific representation operator.
The equal-weight comparator uses exactly the same aligned peer
supports and assigns every peer weight $1/(N_t-1)$.
Their difference isolates the value of the fitted combination within
the common support construction.

We freeze both reconstructions at the origin date.
Let $\mathcal P_{it}^{\rm val}$ be an independent origin validation cloud and
$\mathcal P_{i,t+1}^{\rm dest}$ a destination cloud.
The optimized reconstruction's advantage over equal weights is
evaluated on each:
\begin{align}
  V_{it}^{\rm adv}&=W_2(\mathcal P_{it}^{\rm val},\widehat B_{it}^{\rm eq})
  -W_2(\mathcal P_{it}^{\rm val},\widehat B_{it}^{\rm
  opt}),\label{eq:validation}\\
  D_{it}^{\rm adv}&=W_2(\mathcal P_{i,t+1}^{\rm dest},\widehat B_{it}^{\rm eq})
  -W_2(\mathcal P_{i,t+1}^{\rm dest},\widehat B_{it}^{\rm
  opt}),\label{eq:destination}\\
  L_{it}^{\rm adv}&=V_{it}^{\rm adv}-D_{it}^{\rm adv}.\label{eq:erosion}
\end{align}
Positive $V^{\rm adv}$ means that fitting the combination improves
reconstruction beyond the training articles.
Positive $D^{\rm adv}$ means the frozen combination remains useful at
the next date, while positive $L^{\rm adv}$ measures erosion in that advantage.
Because the origin comparison uses validation articles, the erosion
contrast removes the direct advantage of evaluating an optimized
representation on its own fitting sample.
It still allows information shared across years to maintain a
persistent peer relationship.

The population target is the equal-firm average of these validation
and destination contrasts under the specified article-sampling
construction, conditional on the supported firms and transitions.
The sample analogue averages the nine complete split scores within
each firm-transition, then transitions within firms, and finally firms equally.
Writing $Z_{it}^{(s)}$ for any of the three contrasts in split $s$,
the estimator is
\begin{equation}\label{eq:aggregate}
  \overline Z=\frac1{N_{\rm obs}}\sum_i\frac1{T_i}
  \sum_{t\in\mathcal T_i}\frac19\sum_{s=1}^9 Z_{it}^{(s)}.
\end{equation}
This definition makes the firm observed through several transitions
the unit of observation.
The ratio $\overline D^{\rm adv}/\overline V^{\rm adv}$ summarizes
survival of the aggregate advantage; it differs from the mean of
firm-level retention ratios.

\subsection{Sample construction and uncertainty}

The available Nasdaq firm-news asset supports adjacent transitions
from 2018 through 2021 in the same representation.
Eligibility is determined from article identifiers, dates, embedding
coverage, and story-family metadata before scores are evaluated.
Articles without known family assignments and families spanning
calendar years are excluded.
Origin story families are assigned globally to training or
validation, preventing the same family from entering both sides
through different firms.
Each draw uses 15 training, 15 validation, and 15 destination articles per firm.
A firm-transition enters only when it supports all nine fixed draws.
The resulting panel has \ppvalue{corpus-firms} distinct firms,
\ppvalue{corpus-transitions} firm-transitions, and
\ppvalue{split-scores} reconstruction scores; the annual cohorts
contain 94, 89, and 95 firms.

Shared stories, common annual conditions, overlapping firm-years, and
fitted peer supports induce dependence across scores.
Repeated article splits improve the stability of the descriptive
measurement but do not multiply the number of independent firms or years.
Given the short, repeatedly examined panel, we report these estimates
descriptively.
Appendix~\ref{app:empirical} records the inferential calibration, and
Section~\ref{sec:scope} explains the economic interpretation boundaries.

\section{Evolution of firm-information positions}\label{sec:results}

\subsection{Peer structure persists while firm positions evolve}

Peer structure persists even while firm positions move materially, and
the fitted peer combination remains informative beyond the origin year.
Its advantage over equal weights is \ppnum[4]{corpus-v} on independent origin
validation and \ppnum[4]{corpus-d} at the next date, in rooted
Wasserstein units.
Thus \ppnum[1]{retention-pct}\% of the aggregate validation advantage
survives for one year.
The comparison holds peer supports and alignments fixed across the optimized and
equal-weight reconstructions, isolating the informational value of
the fitted weights.
Independent origin validation establishes that this advantage extends
beyond the training articles.
Table~\ref{tab:reconstruction} shows positive destination advantages
in each of the three annual cohorts.
The pooled estimate averages transitions within firms and then gives
each firm equal weight.

% AUTO-GENERATED by pipeline.io.paper6.manuscript
\begin{table}[tbp]
\centering
\caption{Optimized peer reconstruction retains an advantage as information evolves. Entries are descriptive rooted Wasserstein contrasts on normalized Qwen3-4B article clouds. The \ppnum[0]{reconstruction-draws} splits are averaged within transitions, then transitions within firms, then firms equally. Annual rows use their own supported cohorts; no significance tests are reported.}
\label{tab:reconstruction}
\begin{tabular}{lrrrr}
\toprule
Destination year & Firms & Validation $V$ & Destination $D$ & Erosion $L$ \\
\midrule
2019 & \ppnum[0]{recon-2019-firms} & \ppnum[6]{recon-2019-v} & \ppnum[6]{recon-2019-d} & \ppnum[6]{recon-2019-l} \\
2020 & \ppnum[0]{recon-2020-firms} & \ppnum[6]{recon-2020-v} & \ppnum[6]{recon-2020-d} & \ppnum[6]{recon-2020-l} \\
2021 & \ppnum[0]{recon-2021-firms} & \ppnum[6]{recon-2021-v} & \ppnum[6]{recon-2021-d} & \ppnum[6]{recon-2021-l} \\
All transitions & \ppnum[0]{recon-all-firms} & \ppnum[6]{recon-all-v} & \ppnum[6]{recon-all-d} & \ppnum[6]{recon-all-l} \\
\bottomrule
\end{tabular}
\end{table}

The advantage erodes by \ppnum[5]{corpus-l} rooted Wasserstein units, or
\ppnum[1]{erosion-pct}\% of its aggregate validation value.
Meanwhile, distance to the optimized frozen target rises by
\ppnum[4]{optimized-distance-increase} from validation to destination;
distance to the equal-weight target rises by \ppnum[4]{equal-distance-increase}.
Both absolute increases exceed the erosion in relative advantage.
The fitted combination therefore remains useful as the information
cloud moves away from both targets.
For an analyst maintaining comparables or exposure groups, a previous
peer representation can retain relative value as absolute separation rises.
The relevant assessment distinguishes the persistence of that
advantage from movement in the firm's own position.

A diagnostic holding opportunities fixed illustrates the same
accounting distinction.
Its crowding-only component is negative.
Because rival appeal always exceeds focal appeal in this mapping, a
demand node contributes negatively whenever focal and rival appeal move
in the same direction, by construction
(Appendix~\ref{app:empirical}, equation~\eqref{eq:node-accounting}).
Each rival field counts the other firms' appeal, so an appeal change
shared by every firm at a node enters it almost $N$ times.
An exact decomposition assigns \ppnum[0]{corpus-j-common-pct}\% of the
component to that shared change, and \ppnum[0]{frozen-j-common-pct}\%
under a frozen 2018 demand basis; the shares are of a signed total.
Under this all-firms rivalry construction, the component is dominated
by collective appeal changes.
Within each year, a firm's squared deviation from the shared
change is \ppnum[3]{deviation-ratio-min} to
\ppnum[1]{deviation-ratio-max} times the squared shared change.
The unobserved opportunity component is required to translate this
quantity into the full reallocation gain, so the diagnostic
characterizes motion and leaves optimizing behavior untested.
The numerical decomposition and the comparison with a frozen demand basis
appear in Appendix~\ref{app:empirical}.

\section{Discussion and economic implications}\label{sec:discussion}

Competition need not drive firms apart.
When a common opportunity improves, a firm can rationally expand toward
a region that is becoming more crowded, because the region's
opportunity value has risen enough to cover both its capability costs
and the share it concedes to the rivals arriving with it.
The full reallocation restriction captures that trade-off, and the
exact example shows why rival movement alone gives an incomplete
account of repositioning.
What competition does change is how differently firms respond to the
same shift.
In the main dense economy, the relative component accounts for almost all
of the incremental composition response; this component vanishes when
capabilities are identical.
Capability differences therefore generate differentiation in the model.

For financial analysis, the distinction changes how a peer
representation should be used.
A peer group summarizes a relation at a given information state,
whereas a repositioning model describes how that relation changes when
opportunities move.
The corpus evidence shows that a fitted peer combination can retain
much of its advantage even as both optimized and equal-weight
reconstructions become more distant.
An analyst monitoring corporate exposure can distinguish erosion in
absolute fit from a loss of relative usefulness.
Destination comparisons can guide updates while preserving validated
peer relations whose relative advantage persists.

Distribution-valued measurement is useful at precisely this margin.
It preserves a firm's internal mix of activities or information domains
and permits changes in that mix to be compared across firms, so a broad
shift shared by firms can be told apart from changes in the way
individual firms respond.
For risk monitoring, such changes can signal that the economic basis
for historical covariance or hedge relationships deserves re-examination.
The analysis links representation to an economic mechanism and the
financial decision that follows.

\subsection{Identification and scope}\label{sec:scope}

Structural interpretation requires measures of opportunities and capabilities.
The simulation supplies both, so its model discrimination is
conditional on known economic primitives.
Estimating them from the same destination articles used to evaluate a
response would change the comparison from prediction to fitted explanation.
The corpus evaluates frozen peer representations and a fixed-opportunity
component, which describe measured movement without identifying its
structural cause.
The model's reallocation restriction concerns capacity choices.
The observation experiment measures normalized compositions, which
are outcomes rather than choices in a fixed-scale game.
The response decomposition describes strategic response differences;
separate statistics are needed to compare equilibrium-level dispersion.

Public information measures the composition of reported activity.
Normalization removes scale, and reporting can move independently of
latent activity.
The common reporting mixture provides a precise recoverability result
for its own observation map; an unknown reporting process could alter
both the direction and magnitude of the model contrast.
In the crowding diagnostic, a shared change can reflect common
opportunities, common reporting, or a coordinated equilibrium response.
The diagnostic does not separate these sources, and the unobserved
opportunity component is needed to recover the full reallocation gain.
The short, repeatedly examined historical panel and retrospective
encoder provide development evidence from a historical sample.

\section{Conclusion}\label{sec:conclusion}

Competition can generate differentiation while firms enter the same
increasingly crowded region.
When opportunities change, the gain in value can outweigh the share
each firm concedes to arriving rivals.
Strategic interaction appears in how firms respond differently to
that common shift.
In the benchmark economy, heterogeneous relative differences dominate the
incremental rivalry response, and they disappear when capabilities are
identical.

The corporate-news evidence is consistent with this distinction between
absolute movement and relative structure.
Firms' information distributions move materially over time, yet
optimized peer representations retain most of their independently
validated advantage at the next date.
Peer structure persists as firms' information positions evolve.

The implication for financial analysis is that similarity and peer
relevance are state-dependent.
A common opportunity shock can move firms together while changing their
relative economic positions.
A static peer classification can remain informative as opportunities
and competitive conditions change.

\clearpage
\bibliographystyle{plainnat}
\bibliography{refs}

\clearpage
\appendix
\section{Mathematical preliminaries and notation}\label{app:proofs}

All spaces in the economic model are finite dimensional.
Let $I=\{1,\ldots,N\}$, $\mathcal A=\{1,\ldots,m\}$, and
$\mathcal L=\{1,\ldots,L\}$ index firms, activities, and demand
nodes, with $N,m,L\geq1$.
Firm $i$ chooses $q_i\in\mathcal Q_i=\mathbb R_+^m$ and the joint
profile belongs to
$\mathcal Q=\prod_i\mathcal Q_i$.
We use the ordinary Euclidean product structure:
\begin{equation}\label{eq:app-inner}
  \langle q,v\rangle=\sum_i\sum_a q_{ia}v_{ia},\qquad
  \|q\|^2=\sum_i\|q_i\|^2.
\end{equation}
Thus a joint distance measures all firm-activity capacity
discrepancies in common model units.
A unilateral replacement $(z_i,q_{-i})$ changes one block and leaves
every rival block fixed.
A segment between feasible profiles is feasible, including when
either endpoint lies on a face of the cone.
This convexity licenses the directional comparisons below without an
interiority assumption.

The appeal matrix $K\in\mathbb R_+^{L\times m}$ is common to firms.
Its row $k_\ell^\top$ maps a capacity block into the scalar appeal
$A_{i\ell}=k_\ell^\top q_i$.
Put $B_{i\ell}=b_\ell+\sum_{j\ne i}A_{j\ell}$ and
$D_\ell=b_\ell+\sum_i A_{i\ell}=A_{i\ell}+B_{i\ell}$.
Outside appeal $b_\ell>0$ guarantees $D_\ell\geq b_\ell$ at every
feasible profile.
Opportunity weights $r_\ell\geq0$ are in operating-revenue units, and
$R_\Sigma=\sum_\ell r_\ell$ is the total available opportunity.
No rank condition on $K$ is required: different capacity changes can
have the same appeal image.

Costs are $C_i(q_i)=c_i^\top q_i+q_i^\top H_i q_i/2$, with $c_i\geq0$,
$H_i=H_i^\top$, and $H_i\succeq\nu I$ for a common $\nu>0$.
The registered Gaussian economy uses the same $H$ across firms and
heterogeneous $c_i$.
The proofs permit different $H_i$ when the same lower bound holds.
The negative marginal-profit field is $F_i=-\nabla_i\pi_i$; its
components have units of
operating payoff per unit capacity.
Cost curvature has units of operating payoff per squared unit capacity.
Consequently the error and regret bounds in
Appendix~\ref{app:certificates} have capacity
and payoff units, respectively.

\begin{table}[htbp]
  \centering
  \caption{Mathematical objects and their economic roles. Norms are
  Euclidean unless a Wasserstein distance is explicitly specified.}
  \label{tab:notation}
  \begin{tabular}{lll}
    \toprule
    Symbol & Type & Economic role\\
    \midrule
    $q_i$, $q$ & $\mathbb R_+^m$, $\mathbb R_+^{Nm}$ & Firm and joint
    operating capacity\\
    $S_i$, $p_i$ & $\mathbb R_+$, probability vector & Scale and
    activity composition\\
    $K$, $A_i$ & $\mathbb R_+^{L\times m}$, $\mathbb R_+^L$ &
    Capacity-to-appeal map and appeal\\
    $b$, $B_i$ & $\mathbb R_{++}^L$ & Outside and rival-plus-outside appeal\\
    $r$ & $\mathbb R_+^L$ & Common operating opportunities\\
    $c_i$, $H_i$ & Vector, symmetric matrix & Linear cost and
    expansion curvature\\
    $F$, $s$ & $\mathbb R^{Nm}$ & Marginal-cost field and stationarity error\\
    $\mathcal C$, $\mathcal O$ & Scalars & Crowding and opportunity
    attribution\\
    $\mathcal P_{it}$ & Probability measure & Observed
    firm-information distribution\\
    \bottomrule
  \end{tabular}
\end{table}

\section{Rational-share payoff algebra}\label{app:payoff}

\subsection{One-node identity and feasible denominators}

Fix rivals and one demand node, suppressing indices.
Write current own appeal as $a\geq0$, rival-plus-outside appeal as $B>0$, and
$D=a+B$.
A feasible unilateral change has new own appeal $a+s\geq0$; $s$
itself may be negative.
Both $D$ and $D+s=a+s+B$ are strictly positive.
Subtracting the two share revenues gives
\begin{align}
  \frac{r(a+s)}{D+s}-\frac{ra}{D}
  &=\frac{r[(a+s)D-a(D+s)]}{D(D+s)}
  =\frac{rBs}{D(D+s)}\nonumber\\
  &=\frac{rBs}{D^2}-\frac{rBs^2}{D^2(D+s)}.\label{eq:app-share}
\end{align}
The last equality uses $D=(D+s)-s$ in the numerator of the preceding expression.
The remainder is nonnegative because $r\geq0$, $B>0$, and its
denominator is positive.
Expansion and contraction yield the same sign.
It is the precise loss relative to valuing a finite deviation at the
origin marginal revenue.

\subsection{Revenue and cost expansions}

For a feasible replacement $z_i$, let $d_i=z_i-q_i$ and
$s_{i\ell}=k_\ell^\top d_i$.
Linearity of appeal gives $A_{i\ell}(z_i)=A_{i\ell}(q_i)+s_{i\ell}$.
Applying \eqref{eq:app-share} at every node yields
\begin{align}
  R_i(z_i;r,B_i)-R_i(q_i;r,B_i)
  &=\sum_\ell\frac{r_\ell B_{i\ell}}{D_\ell^2}k_\ell^\top d_i
  -\mathcal R_i(q,d_i),\label{eq:app-revenue-expansion}\\
  \mathcal R_i(q,d_i)
  &=\sum_\ell\frac{r_\ell B_{i\ell}(k_\ell^\top d_i)^2}
  {D_\ell^2(D_\ell+k_\ell^\top d_i)}\geq0.\label{eq:app-remainder}
\end{align}
Every quantity on the right is evaluated at the original profile
except the last denominator,
which is the deviating firm's feasible new denominator.
Symmetry of $H_i$ gives the exact cost difference
\begin{equation}\label{eq:app-cost-expansion}
  C_i(q_i+d_i)-C_i(q_i)
  =(c_i+H_i q_i)^\top d_i+\tfrac12d_i^\top H_i d_i.
\end{equation}
Subtracting \eqref{eq:app-cost-expansion} from \eqref{eq:app-revenue-expansion}
gives \eqref{eq:payoff-expansion}, with the field in \eqref{eq:field}.
There are no omitted higher-order terms.

For later use, consider a fraction $t\in[0,1]$ of a feasible deviation.
The new denominator is at least $b_\ell$, so
\begin{equation}\label{eq:app-remainder-order}
  0\leq\mathcal R_i(q,td_i)
  \leq t^2\sum_\ell\frac{r_\ell B_{i\ell}(k_\ell^\top d_i)^2}{D_\ell^2 b_\ell}.
\end{equation}
The right-hand coefficient is finite and independent of $t$.
This explicit bound justifies taking a directional limit at boundary capacities,
including profiles where some activities remain inactive.

\section{Equilibrium, uniqueness, and symmetry}\label{app:equilibrium}

\subsection{Own concavity and the Nash--VI equivalence}

Holding rivals fixed, the own revenue Hessian is
\begin{equation}\label{eq:app-hessian}
  \nabla_i^2 R_i(q_i;r,B_i)
  =-2\sum_\ell\frac{r_\ell B_{i\ell}}{(A_{i\ell}+B_{i\ell})^3}
  k_\ell k_\ell^\top.
\end{equation}
Its quadratic form is nonpositive, and subtracting $H_i$ makes the
payoff Hessian
bounded above by $-\nu I$.
The formula is valid on an open neighborhood of each feasible profile
because all denominators
are strictly positive there.
Thus each own payoff is strictly concave on its convex feasible set,
with a unique
best response whenever a maximum exists.

Suppose $q^*$ solves the joint VI \eqref{eq:vi}.
Choose a feasible joint alternative that changes only firm $i$.
The VI then says $F_i(q^*)^\top(z_i-q_i^*)\geq0$.
The exact payoff expansion, positive cost curvature, and nonnegative
remainder imply
$\pi_i(z_i,q_{-i}^*)\leq\pi_i(q^*)$ for every feasible $z_i$.
This holds for every firm, so $q^*$ is Nash.
For the converse, suppose $q$ is Nash and fix $z_i\geq0$ with $d_i=z_i-q_i$.
The intermediate choice $q_i+td_i$ is feasible for $0<t\leq1$, and
Nash optimality gives
\begin{equation}\label{eq:app-nash-direction}
  0\geq-tF_i(q)^\top d_i-\tfrac12t^2d_i^\top H_i d_i-\mathcal R_i(q,td_i).
\end{equation}
Divide by $t$ and use \eqref{eq:app-remainder-order} as $t\downarrow0$.
It follows that $F_i(q)^\top d_i\geq0$.
Sum these inequalities across the blocks of an arbitrary feasible
joint alternative
to obtain \eqref{eq:vi}.
This proves necessity and sufficiency, including at the boundary of the orthant.

\subsection{Demand-side monotonicity}

Own concavity alone does not establish joint monotonicity: rival
responses also enter the joint field.
We therefore calculate its demand part explicitly.
At one node let $a=(a_1,\ldots,a_N)$ denote the nonnegative appeal vector and
$D=b+\sum_i a_i$.
The negative marginal-revenue field in appeal coordinates is
$f_i(a)=-r(D-a_i)/D^2$.
For a direction $u\in\mathbb R^N$ and $U=\sum_i u_i$, direct
differentiation gives
\begin{equation}\label{eq:app-node-derivative}
  (Df(a)u)_i=\frac r{D^3}\bigl[D u_i+(D-2a_i)U\bigr].
\end{equation}
The raw Jacobian need not be symmetric.
Its quadratic form, which equals that of its symmetric part, is
\begin{align}
  u^\top Df(a)u
  &=\frac r{D^3}\left[D\left(\sum_i u_i^2+U^2\right)-2U\sum_i a_i
  u_i\right]\nonumber\\
  &=\frac r{D^3}\left[
    b\left(\sum_i u_i^2+U^2\right)
  +\sum_i a_i\left\{\sum_{j\ne i}u_j^2+(U-u_i)^2\right\}\right]\geq0.
  \label{eq:app-demand-squares}
\end{align}
The second equality expands $D=b+\sum_i a_i$ and completes a square
separately for each $a_i$.
All coefficients are nonnegative.
This is the property of the rational-share technology that preserves
monotonicity in the interacting game.

For capacity profiles $q,v$, put $h=q-v$ and $q(t)=v+th$.
At node $\ell$, the corresponding appeal direction is
$u_{i\ell}=k_\ell^\top h_i$.
The revenue part $F^R$ of the capacity field is the pullback of the
node fields through $K$.
The fundamental theorem of calculus therefore yields
\begin{equation}\label{eq:app-integrated-demand}
  \langle F^R(q)-F^R(v),h\rangle
  =\int_0^1\sum_\ell u_\ell^\top Df_\ell(A_\ell(q(t)))u_\ell\,\mathrm dt
  =\mathcal M(q,v)\geq0.
\end{equation}
The finite sums and continuous derivatives justify the integral.
A nonzero $h_i$ may lie in the null space of $K$, so this argument
need not supply
strict curvature in all capacity directions.
That strictness comes from the resource-cost matrices.
The node inequality is the unit-exponent case of the marginal-share Jacobian
curvature in \citet[Lemma~A.4]{ewerhart_quartieri_contests_2020}, with the
outside coordinate fixed and its directional change set to zero.
Equation~\eqref{eq:app-integrated-demand} transfers that curvature to capacity;
the cost bound controls directions in the null space of $K$, supplying the
strong monotonicity required by the general variational-inequality result
\citep{parise_ozdaglar_network_games_2019}.

\subsection{Gaussian cost curvature}

For completeness, the Gaussian congestion kernel is positive
semidefinite on any finite set.
Write $\kappa=\tau_c/d\geq0$ and consider $G_{ab}=\exp(-\kappa\|x_a-x_b\|^2)$.
When $\kappa=0$, $G$ is the all-ones matrix and $v^\top Gv=(\sum_a v_a)^2\geq0$.
When $\kappa>0$, factor its entries and expand the exponential:
\begin{equation}\label{eq:app-gaussian-series}
  G_{ab}=e^{-\kappa\|x_a\|^2}e^{-\kappa\|x_b\|^2}
  \sum_{n=0}^{\infty}\frac{(2\kappa)^n}{n!}\langle x_a,x_b\rangle^n.
\end{equation}
For every nonnegative integer $n$, the matrix with entries
$\langle x_a,x_b\rangle^n$ is a Gram matrix of the tensor vectors
$x_a^{\otimes n}$;
for $n=0$ use the scalar vector one.
Consequently
\begin{equation}\label{eq:app-gaussian-psd}
  v^\top Gv=\sum_{n=0}^{\infty}\frac{(2\kappa)^n}{n!}
  \left\|\sum_a v_a e^{-\kappa\|x_a\|^2}x_a^{\otimes n}\right\|^2\geq0.
\end{equation}
Finite activity sums and absolute convergence of the exponential
series justify the equality.
Thus $H=\eta G+\nu I$ has quadratic form at least $\nu\|v\|^2$ for $\eta\geq0$.
The same conclusion holds for any symmetric $H_i$ satisfying the stated bound.

\subsection{Existence on the unbounded cone}

The cost part of $F$ has difference $H_i(q_i-v_i)$ in block $i$.
Combining its quadratic form with \eqref{eq:app-integrated-demand} proves
\eqref{eq:strong-monotonicity}.
To establish existence, first obtain a strategy bound that holds
uniformly over rivals.
Each share lies between zero and one, hence
\begin{equation}\label{eq:app-capacity-bound}
  \pi_i(q_i,q_{-i})\leq R_\Sigma-c_i^\top q_i-\tfrac12q_i^\top H_i q_i
  \leq R_\Sigma-\tfrac\nu2\|q_i\|^2.
\end{equation}
Zero capacity earns zero for every rival profile.
Choose $R>\sqrt{2R_\Sigma/\nu}$ and let
$X_i=\{q_i\geq0:\|q_i\|\leq R\}$.
Every strategy with norm at least $R$ has negative payoff and is
dominated by zero.
The sets $X_i$ are nonempty, compact, and convex.
Continuity and strict own concavity give a unique maximizer over each
$X_i$ for every rival profile.

The best-response map on $X=\prod_i X_i$ is continuous.
Indeed, if rival profiles converge, compactness supplies a convergent
subsequence of their best responses.
Passing each maximizing inequality to the limit shows that its limit
maximizes at the limiting rivals.
Uniqueness makes that limit the unique best response, and the same
argument applies to every subsequence.
Brouwer's fixed-point theorem supplies a fixed point of the
continuous map $X\to X$.
No choice outside $X_i$ can improve on it, since those choices earn
negative payoff and zero is available inside.
The fixed point is therefore Nash on the entire orthant.
This establishes the compactness-to-orthant step explicitly.

\subsection{Uniqueness and the symmetry boundary}\label{app:symmetry-proof}

Let $q^*$ and $v^*$ be two equilibria.
Their equivalent VI conditions, tested against one another, give
\begin{equation}\label{eq:app-uniqueness}
  \langle F(q^*),v^*-q^*\rangle\geq0,\qquad
  \langle F(v^*),q^*-v^*\rangle\geq0.
\end{equation}
Adding and reversing signs bounds the monotonicity pairing above by zero.
Its lower bound is $\nu\|q^*-v^*\|^2$, so the two equilibria coincide.
If primitives are invariant under relabeling firms, a permutation of
an equilibrium remains an equilibrium:
each permuted firm faces the same own cost and the correspondingly
permuted sum of rivals' appeals.
Uniqueness requires the profile to be invariant under every
transposition of firms.
Hence all firm blocks are equal.
Capability heterogeneity removes this permutation invariance and can
generate different equilibrium allocations.

\section{The independent-firm counterfactual}\label{app:independent-proof}

The independent arm deletes cross-firm appeal while preserving each
firm's own opportunity and cost primitives.
Its payoff is
\begin{equation}\label{eq:app-independent-payoff}
  \pi_i^{\mathrm M}(q_i)=\sum_\ell r_\ell\frac{k_\ell^\top
  q_i}{k_\ell^\top q_i+b_\ell}
  -c_i^\top q_i-\tfrac12q_i^\top H_i q_i.
\end{equation}
The associated field, derived by differentiation of this objective, is
\begin{equation}\label{eq:app-independent-field}
  F_i^{\mathrm M}(q_i)=c_i+H_i q_i-
  \sum_\ell\frac{r_\ell b_\ell}{(k_\ell^\top q_i+b_\ell)^2}k_\ell.
\end{equation}
This is the operator evaluated in the independent numerical comparison.
There are no rival blocks in its derivative; it is a Cartesian
product of one-firm rational-share problems.
Each firm retains access to the full $R_\Sigma$.
Aggregate market revenue can therefore differ across arms.

Each block Jacobian is
\begin{equation}\label{eq:app-independent-curvature}
  DF_i^{\mathrm M}(q_i)=H_i+
  2\sum_\ell\frac{r_\ell b_\ell}{(k_\ell^\top q_i+b_\ell)^3}k_\ell k_\ell^\top
  \succeq\nu I.
\end{equation}
Integration along each block segment and summation over firms gives
$\langle F^{\mathrm M}(q)-F^{\mathrm M}(v),q-v\rangle\geq\nu\|q-v\|^2$.
The payoff bound \eqref{eq:app-capacity-bound} still applies.
A maximum therefore exists for each block and strict concavity makes it unique;
the vector of these maxima is the unique independent equilibrium.
The exact payoff expansion remains valid with $B_{i\ell}=b_\ell$.
The VI equivalence and the distance and regret arguments below follow
directly for this product operator.
Under identical capabilities every independent optimum is identical,
even though its level need not equal
the symmetric strategic equilibrium.

\section{Moving-opportunity revealed preference}\label{app:choice}

Fix two environments $E_0=(r_0,B_0)$ and $E_1=(r_1,B_1)$ for one firm.
The rival fields may be equilibrium fields, but the choice argument
requires only that they be supplied and
held fixed in each unilateral comparison.
Let both endpoint positions belong to the feasible menu at both
dates, and let $C_i$ be the same cost function
in those comparisons.
Suppose the chosen payoff at date $t$ is within
$\varepsilon_{it}\geq0$ of the supremum over that menu.
Comparing it with the other endpoint gives
\begin{align}
  R_i(q_i^0;E_0)-C_i(q_i^0)
  &\geq R_i(q_i^1;E_0)-C_i(q_i^1)-\varepsilon_{i0},\label{eq:app-old-choice}\\
  R_i(q_i^1;E_1)-C_i(q_i^1)
  &\geq R_i(q_i^0;E_1)-C_i(q_i^0)-\varepsilon_{i1}.\label{eq:app-new-choice}
\end{align}
Only these pairwise comparisons are needed; global regret is a
sufficient way of obtaining their allowances.
Writing $\Delta C_i=C_i(q_i^1)-C_i(q_i^0)$ and
$g_i(E)=R_i(q_i^1;E)-R_i(q_i^0;E)$, the same inequalities read
\begin{equation}\label{eq:app-cost-interval}
  g_i(E_0)-\varepsilon_{i0}\leq\Delta C_i
  \leq g_i(E_1)+\varepsilon_{i1}.
\end{equation}
Eliminating $\Delta C_i$ proves $J_i^{\rm full}=g_i(E_1)-g_i(E_0)
\geq-\varepsilon_{i0}-\varepsilon_{i1}$.
Stable costs permit this cancellation even if their levels are
unobserved and differ across firms.
If costs change between dates, the two differences in
\eqref{eq:app-cost-interval} are different objects;
the displayed restriction is then no longer the appropriate cancellation.
Likewise, mutual feasibility is essential to both cross-choice comparisons.

\subsection{Both attribution orders}\label{app:reverse-attribution}

For compactness write $g_{ab}=g_i(r_a,B_b)$ for $a,b\in\{0,1\}$.
Define $\mathcal C_{i0}=g_{01}-g_{00}$, $\mathcal O_{i1}=g_{11}-g_{01}$,
$\mathcal O_{i0}=g_{10}-g_{00}$, and $\mathcal C_{i1}=g_{11}-g_{10}$.
Adding adjacent differences along either path gives
\begin{equation}\label{eq:app-paths}
  g_{11}-g_{00}=(g_{01}-g_{00})+(g_{11}-g_{01})
  =(g_{10}-g_{00})+(g_{11}-g_{10}).
\end{equation}
Both paths start at the old environment and finish at the new one.
Subtract their component sums to obtain
\begin{equation}\label{eq:app-path-difference}
  \mathcal C_{i1}-\mathcal C_{i0}
  =g_{11}-g_{10}-g_{01}+g_{00}
  =\mathcal O_{i1}-\mathcal O_{i0}.
\end{equation}
\label{app:attribution-difference}
The middle expression measures the interaction between opportunity
revaluation and changing rivals.
No sign restriction on it is needed for either identity.
The optimizing-choice restriction is
$\mathcal C_{i0}\geq-\mathcal O_{i1}-\varepsilon_{i0}-\varepsilon_{i1}$,
with its exact-choice version obtained by setting the allowances to zero.
In the independent arm $B_0=B_1=b$, so $g_{01}=g_{00}$ and $g_{11}=g_{10}$;
both crowding contributions vanish identically.

\section{The exact two-firm economy}\label{app:example}

Both firms choose a first-activity share $u\in[0,1]$, with the second
share $1-u$.
For a firm with first share $u$, left and right appeal are
\begin{equation}\label{eq:app-example-appeal}
  A_L(u)=\frac{1+3u}{4},\qquad A_R(u)=\frac{4-3u}{4}.
\end{equation}
Given a rival share $v$, put $D_L(u,v)=1+A_L(u)+A_L(v)$ and
$D_R(u,v)=1+A_R(u)+A_R(v)$.
The two appeals lie in $[1/4,1]$, hence both denominators are at least $3/2$.
With left opportunity weight $w$, the payoff is
\begin{equation}\label{eq:app-example-payoff}
  \pi(u;w,v)=w\frac{A_L(u)}{D_L(u,v)}
  +(1-w)\frac{A_R(u)}{D_R(u,v)}
  -\frac{591}{6125}\bigl[u^2+(1-u)^2\bigr].
\end{equation}
The old opportunity has $w_0=4/5$ and the new opportunity has $w_1=1/5$.
Candidate symmetric first shares are $u_0=3/4$ and $u_1=1/4$.
The cost function is unchanged by reflection $u\mapsto1-u$.

\subsection{Global optimality}\label{app:old-nash}

At the old rival choice, the two deviating denominators are
$D_L(u,u_0)=(33+12u)/16$ and $D_R(u,u_0)=(39-12u)/16$.
Clearing these positive denominators in the payoff difference gives
\begin{equation}\label{eq:app-old-factor}
  [\pi(u_0;w_0,u_0)-\pi(u;w_0,u_0)]D_L(u,u_0)D_R(u,u_0)
  =(u-3/4)^2\left[\frac{1061199}{784000}-\frac{5319}{49000}u^2\right].
\end{equation}
The factor in brackets is at least
$(1061199-85104)/784000=976095/784000>0$ because $0\leq u^2\leq1$.
Division by the positive denominator product proves a nonnegative
payoff gap for every feasible choice,
with equality only when $u=3/4$.
This establishes a unique global best response.

\label{app:new-nash}
At the new rival choice the denominators are $(27+12u)/16$ and $(45-12u)/16$.
Reflection of \eqref{eq:app-old-factor}, or direct substitution, gives
\begin{equation}\label{eq:app-new-factor}
  [\pi(u_1;w_1,u_1)-\pi(u;w_1,u_1)]D_L(u,u_1)D_R(u,u_1)
  =(u-1/4)^2\left[\frac{1061199}{784000}-\frac{5319}{49000}(1-u)^2\right].
\end{equation}
The same lower bound applies since $(1-u)^2\leq1$.
Thus the new choice is the unique global best response to its stated rival.
Each firm's choice is optimal against the other firm's choice at both dates,
so the two displayed profiles are Nash equilibria of their respective
simplex games.
These profiles solve the simplex-constrained game directly; they need
not equal normalized unconstrained capacity equilibria.

\subsection{Exact revenue accounting}

At $u_0$, the own appeal pair is $(13/16,7/16)$ and at $u_1$ it is
$(7/16,13/16)$.
The corresponding rival-plus-outside pairs are $(29/16,23/16)$ and
$(23/16,29/16)$.
The new position's revenue advantage at old and new rivals can
therefore be written
\begin{align}
  g(w,B_0)&=w\left(\frac7{36}-\frac{13}{42}\right)
  +(1-w)\left(\frac{13}{36}-\frac7{30}\right),\label{eq:app-g-old}\\
  g(w,B_1)&=w\left(\frac7{30}-\frac{13}{36}\right)
  +(1-w)\left(\frac{13}{42}-\frac7{36}\right).\label{eq:app-g-new}
\end{align}
Substitution gives $g(w_0,B_0)=-419/6300$,
$g(w_0,B_1)=-499/6300$, and $g(w_1,B_1)=419/6300$.
Hence
\begin{equation}\label{eq:app-exact-gains}
  J^{\rm full}=\frac{419}{3150},\qquad
  \mathcal C_0=-\frac4{315},\qquad
  \mathcal O_1=\frac{51}{350}.
\end{equation}
The opportunity contribution more than compensates for the
unfavorable crowding contribution.
Every fraction follows from the four supplied appeal pairs, so the
accounting can be checked
independently of the optimization argument.

\section{Numerical equilibrium certificates}\label{app:certificates}

\subsection{The stationarity residual and approximate VI}

Let $\widehat q\geq0$ be any feasible profile and define $s$ by
\eqref{eq:residual}.
Put $n=F(\widehat q)-s$.
At active coordinates $n_{ia}=0$; at inactive coordinates
$n_{ia}=\max\{F_{ia}(\widehat q),0\}\geq0$.
Thus $n\geq0$ and $n_{ia}\widehat q_{ia}=0$ at every coordinate.
For every $z\geq0$ it follows that
\begin{align}
  \langle F(\widehat q),z-\widehat q\rangle
  &=\langle s,z-\widehat q\rangle+\langle n,z\rangle\nonumber\\
  &\geq\langle s,z-\widehat q\rangle
  \geq-\|s\|\|z-\widehat q\|.\label{eq:app-approx-vi}
\end{align}
The nonnegative correction means an inactive coordinate with
negative marginal profit contributes no equilibrium error.
This orthant stationarity residual measures incentives for a feasible
deviation.

\subsection{Distance to the unique equilibrium}

Write $h=\widehat q-q^*$.
The exact VI gives $\langle F(q^*),h\rangle\geq0$, and
\eqref{eq:app-approx-vi} with $z=q^*$ gives
$\langle F(\widehat q),h\rangle\leq\|s\|\|h\|$.
Consequently
\begin{equation}\label{eq:app-distance-chain}
  \nu\|h\|^2\leq\langle F(\widehat q)-F(q^*),h\rangle
  \leq\|s\|\|h\|.
\end{equation}
If $h=0$ the distance bound is immediate; otherwise divide by $\nu\|h\|$.
This proves the first inequality in \eqref{eq:certificate} without
requiring the numerical solution to be interior.
\label{app:independent-distance}
The independent product operator satisfies the same
strong-monotonicity inequality,
so the identical argument gives its distance certificate.

\subsection{Unilateral regret}\label{app:regret-proof}

For a feasible unilateral deviation $d_i=z_i-\widehat q_i$, the block form of
\eqref{eq:app-approx-vi} and the exact payoff expansion imply
\begin{align}
  \pi_i(z_i,\widehat q_{-i})-\pi_i(\widehat q)
  &\leq\|s_i\|\|d_i\|-\tfrac\nu2\|d_i\|^2\nonumber\\
  &=\frac{\|s_i\|^2}{2\nu}
  -\frac\nu2\left(\|d_i\|-\frac{\|s_i\|}{\nu}\right)^2
  \leq\frac{\|s_i\|^2}{2\nu}.\label{eq:app-regret-square}
\end{align}
Take the supremum over $z_i\geq0$ to obtain the second inequality in
\eqref{eq:certificate}.
It bounds payoff available from any deviation, not just a coordinate
change or a small step.
\label{app:independent-regret}
The independent payoff expansion has the same form, so each
independent block has the same regret bound
for its own no-rival objective.
These block allowances can be inserted into the approximate-choice
inequality across dates.

\subsection{Multistart and arithmetic interpretation}

If two starts produce feasible approximations $\widehat q^{(a)}$ and
$\widehat q^{(b)}$
with error bounds $e_a,e_b$, uniqueness makes their reference
equilibrium the same point.
The triangle inequality implies
$\|\widehat q^{(a)}-\widehat q^{(b)}\|\leq e_a+e_b$.
Agreement across starts is consequently an additional consistency
check on certified solutions;
it is not itself the source of the equilibrium-distance bound.
Writing $e=\|s\|/\nu$, the experiment requires
$e/\max\{1,\|\widehat q\|\}\leq10^{-7}$ and block regret at most
$10^{-10}\max\{1,R_i\}$.
The first threshold scales capacity error by the magnitude of the
computed profile;
the composition bound is evaluated separately through
Proposition~\ref{prop:normalization}.
It evaluates these conditions separately from the solver's stopping flag.

The inequalities above concern real arithmetic.
Their floating-point evaluation uses an expression-scale allowance
for revenue accounting,
$128\epsilon_{\rm mach}(1+\mathcal A_i)$, where $\mathcal A_i$ sums
the eight nonnegative
revenue evaluations entering the four advantage terms.
The full reallocation gain must exceed the negative sum of the two
certified regrets minus this allowance.
The procedure does not construct interval-arithmetic enclosures.
Its numerical checks and independent recomputation accompany the
replication materials.

\section{The observation map}\label{app:observation}

\subsection{From joint capacity error to a firm block}

For any joint vectors $q,v$, equation~\eqref{eq:app-inner} implies
\begin{equation}\label{eq:app-block-bound}
  \|q_i-v_i\|^2\leq\sum_j\|q_j-v_j\|^2=\|q-v\|^2.
\end{equation}
Both sides are nonnegative, so taking square roots shows that a joint
error bound
also bounds each firm's capacity error.
This connects the joint stationarity certificate to the firm-level
normalization theorem;
using the joint bound for every firm is conservative.

\subsection{Mass perturbation and normalized composition}

Take nonnegative $\widehat q_i,q_i^*$ with $\|\widehat q_i-q_i^*\|\leq e$.
Since the all-ones vector in $\mathbb R^m$ has norm $\sqrt m$,
Cauchy--Schwarz gives
\begin{equation}\label{eq:app-mass-error}
  |\widehat S_i-S_i^*|
  =|\langle\mathbf1,\widehat q_i-q_i^*\rangle|\leq\sqrt m\,e.
\end{equation}
The assumed margin $\widehat S_i>\sqrt m\,e$ makes both masses
positive and bounds
$S_i^*$ below by $\widehat S_i-\sqrt m\,e$.
Nonnegativity also implies
$\|\widehat q_i\|^2=\sum_a\widehat q_{ia}^2\leq(\sum_a\widehat
q_{ia})^2=\widehat S_i^2$.
Now use the exact decomposition
\begin{equation}\label{eq:app-normalization-split}
  \frac{\widehat q_i}{\widehat S_i}-\frac{q_i^*}{S_i^*}
  =\frac{\widehat q_i-q_i^*}{S_i^*}
  +\frac{S_i^*-\widehat S_i}{\widehat S_i S_i^*}\widehat q_i.
\end{equation}
The triangle inequality and the preceding bounds give
\begin{align}
  \left\|\frac{\widehat q_i}{\widehat S_i}-\frac{q_i^*}{S_i^*}\right\|
  &\leq\frac e{S_i^*}
  +\frac{|S_i^*-\widehat S_i|}{\widehat S_i S_i^*}\|\widehat q_i\|\nonumber\\
  &\leq\frac{(1+\sqrt m)e}{S_i^*}
  \leq\frac{2\sqrt m\,e}{\widehat S_i-\sqrt
  m\,e}.\label{eq:app-normalization-chain}
\end{align}
The final inequality uses $m\geq1$.
This establishes Proposition~\ref{prop:normalization}.
The bound concerns errors around a positive-mass equilibrium, rather
than preservation of
all economic effects when capacity is normalized.
For example, multiplying a positive block by a scalar changes
capacity but leaves its composition unchanged.

\subsection{Common reporting attenuation}

Let $\widetilde p_{it}^k=(1-\rho)p_{it}^k+\rho H_t^{\rm rep}$ in arm $k$,
with the same $\rho$ and probability vector $H_t^{\rm rep}$ in both arms.
The opportunity-induced definition in \eqref{eq:reporting} is a
probability vector
because its numerator is nonnegative and its denominator sums those numerators.
In the experiment Gaussian appeal is strictly positive and total
opportunity mass is positive,
so the denominator is nonzero.
Subtracting the two reported changes cancels the common component exactly:
\begin{equation}\label{eq:app-reporting-cancellation}
  \Delta\widetilde p_{it}^{\mathrm S}-\Delta\widetilde p_{it}^{\mathrm M}
  =(1-\rho)(\Delta p_{it}^{\mathrm S}-\Delta p_{it}^{\mathrm M}).
\end{equation}
Thus squared composition separation is multiplied by $(1-\rho)^2$.
This is a within-composition comparison with unchanged units.
The squared capacity and composition distances have different units
and respond differently to a rescaling of the economy.

\section{Common and relative response accounting}\label{app:relative-proof}

Let $\mathcal I_t$ be the observed firms for transition $t$ and let
$w_{it}=1/(N_{\rm obs}T_i)>0$ for its supported rows.
The weights sum to one across all rows.
For $d_{it}=\Delta p_{it}^{\mathrm S}-\Delta p_{it}^{\mathrm M}$, put
$W_t=\sum_{i\in\mathcal I_t}w_{it}$ and
$\bar d_t=W_t^{-1}\sum_{i\in\mathcal I_t}w_{it}d_{it}$.
Every included transition has $W_t>0$.
By construction,
\begin{equation}\label{eq:app-centered-zero}
  \sum_{i\in\mathcal I_t}w_{it}(d_{it}-\bar d_t)=0.
\end{equation}
Expanding the squared norm around this mean and summing gives
\begin{align}
  \sum_{it}w_{it}\|d_{it}\|^2
  &=\sum_t W_t\|\bar d_t\|^2
  +\sum_{it}w_{it}\|d_{it}-\bar d_t\|^2\nonumber\\
  &\quad+2\sum_t\left\langle\bar d_t,
  \sum_i w_{it}(d_{it}-\bar d_t)\right\rangle.\label{eq:app-weighted-expansion}
\end{align}
The last term vanishes by \eqref{eq:app-centered-zero}, proving
\eqref{eq:relative}.
The conditional weighted mean is essential in an unbalanced panel: an
unweighted mean
need not eliminate the cross term under equal-firm aggregation.
The two terms are nonnegative and have the same squared-composition
units as their sum.

If firms are identical within each arm, their response difference is
common across firms at each date,
so every centered $d_{it}-\bar d_t$ is zero, regardless of missing observations.
With heterogeneous capabilities the relative share measures
dispersion in the incremental response to rivalry.
It provides no ordering of cross-sectional dispersion in equilibrium
levels across the two models.
Under the common reporting map, both terms are multiplied by $(1-\rho)^2$;
for $\rho<1$ their shares of a positive total are unchanged.
These weighted-variance identities are conventional algebraic
accounting, separate from the machine-checked equilibrium claims.

The equilibrium, finite-change, exact-example, residual, and
normalization results have
formal counterparts in Lean 4.
The proofs above give a conventional mathematical presentation; the
replication supplement maps
the claims to their mechanically checked statements and assumptions.
The compact best-response and Gaussian-series arguments present
standard mathematical routes
to conclusions whose formal implementations use the library's
corresponding constructions.

\section{Economic and measurement experiment}\label{app:design}

\subsection{Economic environment and solution protocol}

Tables~\ref{tab:economic-design}--\ref{tab:measurement-design} distinguish
primitives that determine economic choices, numerical rules that locate those
choices, and sampling rules that determine what an observer measures.
The parameterization is a benchmark mechanism experiment, rather than an
empirical calibration of corporate capabilities or opportunities.
It retains the fixed-site geometry and baseline cost, kernel, and outside-appeal
settings of the underlying capacity model, with capability penalty $\gamma=2$
as the heterogeneous treatment and $\gamma=0$ as the symmetry control.
The opportunity path and paired comparisons were fixed before production;
a stationary six-firm feasibility pilot checked solver tolerances without
examining economic contrasts.
The reported magnitudes characterize this benchmark, with all settings held
fixed across the relevant paired comparisons.
The four distinct moving states and one stationary state generate
\ppvalue{equilibrium-states}
arm-specific equilibria across market sizes and capability specifications.
Four initial profiles per state give \ppvalue{equilibrium-solves}
accepted solves.

\begin{table}[htbp]
  \centering
  \caption{Economic primitives held fixed across rivalry arms}
  \label{tab:economic-design}
  \begin{tabular}{p{0.32\linewidth}p{0.58\linewidth}}
    \toprule
    Primitive & Declared specification \\
    \midrule
    Firms and geometry & $N\in\{6,100\}$; $m=8$ activities; $L=64$
    demand nodes; one dimension \\
    Appeal and congestion & Gaussian precisions
    $\tau=\ppnum[0]{appeal-tau}$, $\tau_c=\ppnum[0]{congestion-tau}$ \\
    Outside alternative & $b_\ell=\ppnum[1]{outside-appeal}$ at every node \\
    Baseline resource cost & $c_0=\ppnum[1]{linear-cost}$ \\
    Capability penalty & $\gamma=\ppnum[0]{capability-penalty}$
    primary; $\gamma=\ppnum[0]{symmetry-penalty}$ symmetry control \\
    Expansion curvature & $\eta=\ppnum[1]{congestion}$ congestion;
    $\nu=\ppnum[1]{own-curvature}$ own curvature \\
    Opportunity tilt & $a_t=(-\log4,-\log(4/3),\log(4/3),\log4)$;
    constant $a_t=0$ control \\
    Opportunity mass & $\sum_\ell r_{\ell t}=N$ in both arms \\
    \bottomrule
  \end{tabular}
  \par\smallskip
  Geometry is fixed by the construction below and shared across arms.
  The dated weights follow \eqref{eq:opportunity-path}.
  No incumbent-centred adjustment penalty enters the dated payoff.
\end{table}

For a positive integer $j=\sum_{s\geq0}d_s2^s$ with binary digits
$d_s\in\{0,1\}$, define the base-two
radical inverse $\phi_2(j)=\sum_{s\geq0}d_s2^{-(s+1)}$.
The activity sites are $x_a=\phi_2(a)$ for $a=1,\ldots,8$, in the order
\[
  (x_1,\ldots,x_8)=(1/2,1/4,3/4,1/8,5/8,3/8,7/8,1/16).
\]
Demand nodes are the midpoint grid $z_\ell=(\ell-1/2)/64$,
$\ell=1,\ldots,64$, and permanent capability centres are
$h_i=\phi_2(i)$, $i=1,\ldots,N$.
In the \ppnum[0]{corpus-firms}-firm observation experiment, indices follow
the sorted corpus-firm
identifiers and remain fixed when observations are missing.
These are assigned synthetic capabilities, not estimates of the corresponding
companies' technologies.

\begin{table}[htbp]
  \centering
  \caption{Numerical solution and equilibrium acceptance rules}
  \label{tab:numerical-design}
  \begin{tabular}{p{0.29\linewidth}p{0.61\linewidth}}
    \toprule
    Setting & Declared rule \\
    \midrule
    Method & Projected extragradient on the nonnegative orthant \\
    Trial and update & $y=P_+(q-\alpha F(q))$; $q^+=P_+(q-\alpha F(y))$ \\
    Backtracking & Require $\alpha\|F(y)-F(q)\|\leq\frac12\|y-q\|$;
    multiply $\alpha$ by \ppnum[1]{step-contraction} up to
    \ppnum[0]{max-backtracks} times \\
    Step schedule & Start at $\alpha=\ppnum[0]{initial-step}$; after
    acceptance use
    $\min\{\ppnum[0]{max-accepted-step},\ppnum[1]{step-expansion}\,\alpha\}$ \\
    Iteration ceiling & \ppnum[0]{iteration-ceiling} updates per start \\
    Initial profiles & Zero; uniform unit mass; unit mass at the
    first site; capacity ramp from $0.1$ to $1$ \\
    Distance requirement & $e/\max\{1,\|\widehat q\|\}\leq
    \ppnume[0]{distance-tolerance}$, $e=\|s\|/\nu$ \\
    Regret requirement & $\|s_i\|^2/(2\nu)\leq
    \ppnume[0]{regret-tolerance}\max\{1,R_i\}$ for every firm \\
    Cross-start agreement & Distance from the first solution bounded
    by the two error radii plus roundoff \\
    \bottomrule
  \end{tabular}
  \par\smallskip
  All certificates are recomputed at the final feasible profile.
  Failed line searches or iteration limits do not certify equilibrium.
  Cross-start roundoff is $64\epsilon_{\rm mach}$ times the maximum of one
  and the two profile norms; Appendix~\ref{app:certificates} explains the
  certificate interpretation. Iterations are computational, not economic time.
\end{table}

\subsection{Stochastic observation protocol}

The economic accounting covers all \ppnum[0]{latent-transitions}
firm-transitions in the primary
\ppnum[0]{sim-firms}-firm economy.
The measurement layout retains that latent market while observing only its
\ppnum[0]{corpus-transitions} eligible transitions.
This distinction allows the sampling experiment to preserve missing observations
without changing which firms compete.

\begin{table}[H]
  \centering
  \caption{Finite-information observation of fixed equilibrium paths}
  \label{tab:measurement-design}
  \begin{tabular}{p{0.30\linewidth}p{0.60\linewidth}}
    \toprule
    Design element & Declared rule \\
    \midrule
    Observation layout & \ppnum[0]{corpus-firms} firms;
    \ppnum[0]{corpus-transitions} eligible transitions; \ppnum[0]{article-occurrences}
    unique article/profile occurrences; \ppnum[0]{story-families} families \\
    Reported composition & $(1-\rho)p_{it}^k+\rho H_t^{\rm rep}$,
    with $\rho\in\{0,0.2,0.4\}$ \\
    Replications & \ppvalue{panels} panels per generating world and
    reporting cell \\
    Family coupling & One uniform variate per family, mapped through
    each profile's inverse cumulative distribution \\
    Overlapping dates & Reuse the same sampled firm-year profile \\
    Aggregation & $w_{it}=1/(N_{\rm obs}T_i)$; transitions within
    firm, then equal firms \\
    Selection loss & Squared Euclidean error in the eight-site
    composition change \\
    Recovery threshold & 95 per cent Wilson lower endpoint above
    $0.80$ in both worlds; exact ties select neither \\
    \bottomrule
  \end{tabular}
  \par\smallskip
  Randomness enters article sampling, conditional on the equilibrium paths.
  Family coupling preserves profile marginals and dependence through a shared
  draw; different cumulative probabilities can assign a shared family to
  different synthetic sites. No common semantic site is imposed across firms.
\end{table}

Sampling uses NumPy 2.4.6's default generator (PCG64), initialized for each
panel through SeedSequence with entropy tuple $(20260912,32,c,b)$,
where $c$ is the cell index and $b$ the panel index.
Panel indices run from 0 to 499.
Cell indices run from 0 to 11 in nested order: capability penalty $(0,2)$,
path (stationary, moving), then reporting share $(0,0.2,0.4)$.
Indices retain this order even if a population gate skips a cell.
Both generating worlds recreate the same cell-panel stream, pairing their
family uniforms rather than drawing independent worlds.
Within a panel, one uniform is drawn for each family in the frozen layout's
family order, and reused at every occurrence of that family.
The replication materials retain that ordering and the article-to-profile map.

\subsection{Oracle loss and selection frequencies}

The next comparison asks whether finite samples retain enough
information to select the generating model.
For each firm-transition, let $\delta_{it}^k=\widetilde
p_{i,t+1}^k-\widetilde p_{it}^k$ be model $k$'s predicted
reported-composition change.
In generating world $k$, article samples produce the observed change
$\widehat\delta_{it,b}^{\,k}=\widehat p_{i,t+1,b}^{\,k}-\widehat p_{it,b}^{\,k}$
in measurement panel $b$.
The selection statistic compares squared Euclidean errors on the
eight-site probability vector:
\begin{equation}\label{eq:oracle}
  D_b^{\mathrm{oracle},k}=\sum_{it}w_{it}\left[
    \|\widehat\delta_{it,b}^{\,k}-\delta_{it}^{\mathrm M}\|_2^2-
  \|\widehat\delta_{it,b}^{\,k}-\delta_{it}^{\mathrm S}\|_2^2\right],
  \qquad w_{it}=\frac1{N_{\rm obs}T_i}.
\end{equation}
Here the weights are those of Appendix~\ref{app:relative-proof}.
Positive values select the strategic model; negative values select
the independent model; exact ties select neither.
The weights average transitions within firms before averaging firms,
preventing firms with longer coverage from dominating the comparison.

For each reporting level, we generate \ppvalue{panels} panels from
each arm and compare them with both known model predictions.
The prespecified recovery criterion requires the lower endpoint of a
two-sided 95 per cent Wilson interval for correct selection to exceed
\ppnum[2]{recovery-target} in both generating worlds.
Table~\ref{tab:recovery} reports correct selection in every primary
panel, with a lower endpoint of \ppnum[4]{primary-wilson}.
The symmetry control also discriminates the arms, because rivalry
changes the collective response even when firms are identical within
each economy.
On the stationary path, both models predict zero change and every
comparison ties.
These are oracle results: the economic primitives and reporting map
are supplied to the competing predictions, so the experiment isolates
measurement recoverability from parameter estimation.

% AUTO-GENERATED by pipeline.io.paper6.manuscript
\begin{table}[tbp]
\centering
\caption{Reported composition distinguishes the two generating models. Separation is equal-firm mean squared Euclidean response distance on the \ppnum[0]{corpus-transitions}-transition layout. Selection counts are out of \ppnum[0]{panels} panels per generating world; every displayed cell has a two-sided 95 per cent Wilson lower endpoint of \ppnum[4]{primary-wilson}.}
\label{tab:recovery}
\begin{tabular}{rrrr}
\toprule
Reporting share & Separation & Strategic world & Independent world \\
\midrule
0.0 & \ppnum[6]{reported-0} & \ppnum[0]{correct-0-strategic} & \ppnum[0]{correct-0-independent} \\
0.2 & \ppnum[6]{reported-20} & \ppnum[0]{correct-20-strategic} & \ppnum[0]{correct-20-independent} \\
0.4 & \ppnum[6]{reported-40} & \ppnum[0]{correct-40-strategic} & \ppnum[0]{correct-40-independent} \\
\bottomrule
\end{tabular}
\end{table}

\section{Additional empirical design and diagnostics}\label{app:empirical}

\subsection{Frozen reconstruction and sample support}

For a target training cloud $(x_{ik})_{k=1}^n$, align each peer's
cloud to that target
using the registered optimal assignment, and denote the resulting
points $y_{jk}^{(i)}$.
The target-anchored reconstruction solves
\begin{equation}\label{eq:reconstruction}
  \widehat\alpha_i\in\underset{\alpha\geq0,\,\sum_{j\ne i}\alpha_j=1}{\arg\min}
  \frac1n\sum_{k=1}^n\left\|x_{ik}-\sum_{j\ne i}\alpha_j y_{jk}^{(i)}\right\|^2,
  \qquad \widehat z_{ik}=\sum_{j\ne i}\widehat\alpha_{ij}y_{jk}^{(i)}.
\end{equation}
The empirical reconstruction places mass $1/n$ at each fitted support point.
The equal-weight version uses the same alignments with $\alpha_j=1/(N_t-1)$.
An unordered-pair assignment and its inverse are reused consistently
when optimal assignments tie.
Both fitted supports are frozen before independent validation and
destination evaluation.

The family-safe origin split is global within each year and draw, so
a syndicated family cannot appear in training for one firm and
validation for another.
A firm-transition must supply all three 15-article supports in every draw.
Every accepted split contributes to the final estimate, producing
\ppnum[0]{split-scores} reconstruction rows on the common
\ppnum[0]{corpus-transitions}-transition panel.
Exact article identifiers align the primary 4B representation with an
8B comparison where both are available.
That intersection contains only \ppnum[0]{representation-firms} firms
for one transition, making
it a representation diagnostic rather than the primary sample.
Section~\ref{app:representation-ladder} instead lets each encoder
select its own supported cohort.

\subsection{Representation ladder}\label{app:representation-ladder}

The reconstruction contrast is a statement about articles under a
fixed encoder, so a different encoder could change which peers appear
close and how quickly that closeness decays.
Table~\ref{tab:representation-ladder} reruns the frozen reconstruction
under the \ppnum[0]{ladder-cells}-cell representation ladder shared
with the companion papers.
Panel~A varies model capacity within one family, Matryoshka width
within one model, and the architecture and training provider at a
common width.
Panel~B holds architecture, parameter count, tokenizer, corpus sampler,
and seed fixed and varies only the date of the corpus snapshot on
which the encoder was trained.

Each cell reapplies the primary rule unchanged: the same story-family
exclusions, the same seeded rule for the global origin split, the same nine draws
of 15 training, validation, and destination articles, and the same
estimator and aggregation.
Nothing is tuned to the cell.
Because article coverage differs across encoder caches, each
full-width encoder supports its own cohort.
A truncated cell keeps the leading coordinates of its provider's
vectors before unit normalization, so it scores exactly the articles
of the full-width row and isolates width.
The \emph{Shared} column counts firm-transitions that also enter the
primary cohort; where it falls below the row's own count, the
comparison mixes representation with sample composition.
The Qwen3-4B cell reproduces Table~\ref{tab:reconstruction} exactly,
which checks that the ladder applies the primary construction rather
than an approximation to it.

Distances are measured on each encoder's own unit sphere, so their
magnitudes are not on a common scale across rows.
The sign of the destination advantage and its retention ratio $D/V$
are scale free, and these are the comparable quantities.
The Qwen3-8B and BGE caches embed a smaller article subsample.
Under the unchanged support rule they admit only
\ppnum[0]{ladder-thin-transitions} firm-transitions in a single year,
so those rows describe a small sub-cohort rather than the primary panel.
The Qwen3-4B and EttaX caches cover the same articles, and those rows
score exactly the primary cohort.

The destination advantage is positive in
\ppnum[0]{ladder-positive-d} of the \ppnum[0]{ladder-supported}
estimated cells, so the frozen combination remains better than equal
weights at the next date under every representation.
How much of the advantage survives depends on the encoder.
Retention is at least \ppnum[0]{ladder-min-qwen-retention-pct}\% in
every Qwen3 and BGE row but at most
\ppnum[0]{ladder-max-ettax-retention-pct}\% under the three
320-coordinate EttaX vintages, although those rows score the same
articles as the primary.
Retention therefore measures persistence under a given representation,
not a quantity that is invariant to it.
As in the primary table, the entries are descriptive, and the rows
share firms, stories, and years, so agreement across rows is not
independent confirmation.

% AUTO-GENERATED by pipeline.io.paper6.manuscript
\begin{table}[tbp]
\centering
\caption{Representation ladder for the frozen reconstruction contrast. Each row reapplies the primary family-safe nine-draw sample rule and the same estimator to one encoder from the cross-paper roster; $^{\dagger}$ marks the primary Qwen3-4B cell, which reproduces Table~\ref{tab:reconstruction}. Truncated rows keep the leading Matryoshka coordinates before unit normalization and share article identifiers with their full-width provider. \emph{Shared} counts firm-transitions also in the primary cohort. Distances are in each encoder's own unit-sphere units, so levels are comparable in sign and retention $D/V$, not in magnitude. No significance tests are reported.}
\label{tab:representation-ladder}
\small
\setlength{\tabcolsep}{4pt}
\begin{tabular}{lrrrrrrr}
\toprule
Encoder & Dim. & Transitions & Shared & $V$ & $D$ & $L$ & $D/V$ (\%) \\
\midrule
\multicolumn{8}{l}{\textit{Panel A: governed representation roster}} \\
Qwen3-8B & 4096 & \ppnum[0]{ladder-qwen8b-full-firm-transitions} & \ppnum[0]{ladder-qwen8b-full-shared-with-primary} & \ppnum[6]{ladder-qwen8b-full-v} & \ppnum[6]{ladder-qwen8b-full-d} & \ppnum[6]{ladder-qwen8b-full-l} & \ppnum[1]{ladder-qwen8b-full-retention-pct} \\
Qwen3-4B$^{\dagger}$ & 2560 & \ppnum[0]{ladder-qwen4b-full-firm-transitions} & \ppnum[0]{ladder-qwen4b-full-shared-with-primary} & \ppnum[6]{ladder-qwen4b-full-v} & \ppnum[6]{ladder-qwen4b-full-d} & \ppnum[6]{ladder-qwen4b-full-l} & \ppnum[1]{ladder-qwen4b-full-retention-pct} \\
Qwen3-8B@1024 & 1024 & \ppnum[0]{ladder-qwen8b-1024-firm-transitions} & \ppnum[0]{ladder-qwen8b-1024-shared-with-primary} & \ppnum[6]{ladder-qwen8b-1024-v} & \ppnum[6]{ladder-qwen8b-1024-d} & \ppnum[6]{ladder-qwen8b-1024-l} & \ppnum[1]{ladder-qwen8b-1024-retention-pct} \\
Qwen3-4B@1024 & 1024 & \ppnum[0]{ladder-qwen4b-1024-firm-transitions} & \ppnum[0]{ladder-qwen4b-1024-shared-with-primary} & \ppnum[6]{ladder-qwen4b-1024-v} & \ppnum[6]{ladder-qwen4b-1024-d} & \ppnum[6]{ladder-qwen4b-1024-l} & \ppnum[1]{ladder-qwen4b-1024-retention-pct} \\
BGE-large-v1.5 & 1024 & \ppnum[0]{ladder-bge-large-full-firm-transitions} & \ppnum[0]{ladder-bge-large-full-shared-with-primary} & \ppnum[6]{ladder-bge-large-full-v} & \ppnum[6]{ladder-bge-large-full-d} & \ppnum[6]{ladder-bge-large-full-l} & \ppnum[1]{ladder-bge-large-full-retention-pct} \\
Qwen3-8B@256 & 256 & \ppnum[0]{ladder-qwen8b-256-firm-transitions} & \ppnum[0]{ladder-qwen8b-256-shared-with-primary} & \ppnum[6]{ladder-qwen8b-256-v} & \ppnum[6]{ladder-qwen8b-256-d} & \ppnum[6]{ladder-qwen8b-256-l} & \ppnum[1]{ladder-qwen8b-256-retention-pct} \\
Qwen3-8B@64 & 64 & \ppnum[0]{ladder-qwen8b-64-firm-transitions} & \ppnum[0]{ladder-qwen8b-64-shared-with-primary} & \ppnum[6]{ladder-qwen8b-64-v} & \ppnum[6]{ladder-qwen8b-64-d} & \ppnum[6]{ladder-qwen8b-64-l} & \ppnum[1]{ladder-qwen8b-64-retention-pct} \\
\midrule
\multicolumn{8}{l}{\textit{Panel B: matched EttaX encoder vintages}} \\
EttaX V0 & 320 & \ppnum[0]{ladder-ettax-v0-firm-transitions} & \ppnum[0]{ladder-ettax-v0-shared-with-primary} & \ppnum[6]{ladder-ettax-v0-v} & \ppnum[6]{ladder-ettax-v0-d} & \ppnum[6]{ladder-ettax-v0-l} & \ppnum[1]{ladder-ettax-v0-retention-pct} \\
EttaX V1 & 320 & \ppnum[0]{ladder-ettax-v1-firm-transitions} & \ppnum[0]{ladder-ettax-v1-shared-with-primary} & \ppnum[6]{ladder-ettax-v1-v} & \ppnum[6]{ladder-ettax-v1-d} & \ppnum[6]{ladder-ettax-v1-l} & \ppnum[1]{ladder-ettax-v1-retention-pct} \\
EttaX V3 & 320 & \ppnum[0]{ladder-ettax-v3-firm-transitions} & \ppnum[0]{ladder-ettax-v3-shared-with-primary} & \ppnum[6]{ladder-ettax-v3-v} & \ppnum[6]{ladder-ettax-v3-d} & \ppnum[6]{ladder-ettax-v3-l} & \ppnum[1]{ladder-ettax-v3-retention-pct} \\
\bottomrule
\end{tabular}
\end{table}

\subsection{Fixed-opportunity composition diagnostic}

The equal-firm crowding-only estimate is \ppnum[8]{corpus-j}.
Forming the ratio to baseline modeled revenue within firm-transitions
before aggregation
gives \ppnum[1]{corpus-j-bp} basis points of model operating opportunity.
Fall/fall nodes, where focal and rival appeal both decrease, contribute
\ppnum[1]{fall-share-pct}\% of gross negative magnitude.
These quantities characterize comovement in the fixed-opportunity mapping;
the moving-opportunity restriction also requires the opportunity contribution.

The fixed-opportunity statistic uses all eligible origin and
destination articles for each accepted firm-transition.
Its nodes are origin centroids, its squared-distance bandwidth is the
median positive pairwise squared centroid distance, and uniform
opportunity weights sum to one.
Unit-mass firm profiles and outside appeal one imply the
rival-product dominance condition by construction.
The nodewise crowding contribution factors as
\begin{equation}\label{eq:node-accounting}
  \mathcal C_{i0,\ell}=-c_{i\ell}\Delta A_{i\ell}\Delta B_{i\ell},\quad
  c_{i\ell}=\frac{r_{\ell0}(B_{i\ell,0}B_{i\ell,1}-A_{i\ell,0}A_{i\ell,1})}
  {\prod_{u,v\in\{0,1\}}(A_{i\ell,u}+B_{i\ell,v})}\geq0.
\end{equation}
Same-direction appeal changes contribute negatively, while
opposite-direction changes contribute positively.
Node-specific signs can differ while the aggregate restriction holds.

The sign pattern is therefore structural, and the fall/fall share
describes which same-direction nodes carry the negative mass.
Rival appeal at a node is outside appeal plus the other firms' appeal,
so with $N$ firms, cohort-mean focal change $m_\ell$, and firm
deviation $e_{i\ell}=\Delta A_{i\ell}-m_\ell$,
\begin{equation}\label{eq:common-shift}
  \mathcal C_{i0,\ell}=
  -c_{i\ell}(N-1)m_\ell^2
  -c_{i\ell}(N-2)m_\ell e_{i\ell}
  +c_{i\ell}e_{i\ell}^2
\end{equation}
exactly.
The first term is the common shift and is nonpositive.
The interaction term can take either sign; heterogeneous $c_{i\ell}$
keep it nonzero even when the deviations sum to zero.
The firm-specific term is nonnegative.
Each term is evaluated with the original node factor and aggregated
with the original transitions-within-firm, equal-firm weights.
For the equal-firm crowding-only estimate the three terms are
\ppnum[8]{corpus-j-common}, \ppnum[8]{corpus-j-cross}, and
\ppnum[8]{corpus-j-idiosyncratic}, so the common term is
\ppnum[1]{corpus-j-common-pct}\% of the signed total.
A share above 100 per cent means the positive terms offset part of
the common term.
Nodes where mean appeal falls carry
\ppnum[1]{corpus-common-fall-share-pct}\% of the gross negative magnitude.
This pooled share is carried by the earlier destination years.
In \ppvalue{late-year} the component is at most
\ppnum[2]{late-j-ratio} times its size in any earlier year, and
falling-mean nodes carry \ppnum[1]{corpus-late-common-fall-share-pct}\%
of the negative magnitude, or
\ppnum[1]{frozen-late-common-fall-share-pct}\% under the frozen basis.
These figures, and the deviation ratio in the main text, use per-year
rows, where the cohort, and hence the multiplier $N-1$
in~\eqref{eq:common-shift}, is fixed.
The common share is \ppnum[1]{paired-j-common-pct}\% for the paired
origin cohort and \ppnum[1]{frozen-j-common-pct}\% under the frozen
basis, whose terms are \ppnum[8]{frozen-j-common},
\ppnum[8]{frozen-j-cross}, and \ppnum[8]{frozen-j-idiosyncratic}.
Refitting the nodes at each origin therefore leaves the common term
unchanged.
The decomposition describes how observed appeal changes enter this
score.

The frozen demand comparison uses \ppnum[0]{basis-nodes} nodes from
\ppnum[0]{basis-articles} eligible 2018
articles and holds those nodes, their bandwidth, and uniform
opportunity weights fixed.
It evaluates the same \ppnum[0]{paired-transitions} later transitions
in both arms, covering \ppnum[0]{paired-firms}
distinct firms.
The baseline-opportunity-scaled contrast is \ppnum[1]{paired-j-bp}
basis points for origin geometry and \ppnum[1]{frozen-j-bp} for frozen geometry.
Each ratio uses its own specification's baseline revenue.
The raw paired contrasts are \ppnum[8]{paired-j} under origin geometry and
\ppnum[8]{frozen-j} under frozen geometry.
Global cross-year family classifications use retrospective metadata;
the basis is dated 2018 rather than a strict 2018 information set.

\subsection{Inferential calibration}

The conditional calibration holds economic states fixed and resamples
synchronized story-family weights.
Its confidence bounds do not meet the prespecified coverage criterion
across the nonstationary cells,
so the corpus estimates are reported without confidence intervals or
significance stars.
This coverage exercise is distinct from oracle model selection, which
compares two supplied response predictions.

\end{document}